\def\RSsubtxt{section~}\newref{sub}{name = \RSsubtxt}}
\def\RSthmtxt{theorem~}\newref{thm}{name = \RSthmtxt}}
\def\RSlemtxt{lemma~}\newref{lem}{name = \RSlemtxt}}
\theoremstyle{plain}
\newtheorem{thm}{\protect\theoremname}[section]
  \theoremstyle{definition}
  \newtheorem{defn}[thm]{\protect\definitionname}
  \theoremstyle{plain}
  \newtheorem{lem}[thm]{\protect\lemmaname}
  \theoremstyle{plain}
  \newtheorem{cor}[thm]{\protect\corollaryname}
   \author[1]{Rotem Arnon Friedman}
   \author[2]{Esther H{\"a}nggi}
   \author[1]{Amnon Ta-Shma}
   \affil[1]{The Blavatnik School of Computer Science, Tel-Aviv University, Israel}
   \affil[2]{Centre for Quantum Technologies, National University of Singapore, Singapore}
  \providecommand{\corollaryname}{Corollary}
  \providecommand{\definitionname}{Definition}
  \providecommand{\lemmaname}{Lemma}
\providecommand{\theoremname}{Theorem}
\begin{document}

\title{Towards the Impossibility of Non-Signalling Privacy Amplification
from Time-Like Ordering Constraints}
\maketitle
\begin{abstract}
In the past few years there was a growing interest in proving the
security of cryptographic protocols, such as key distribution protocols,
from the sole assumption that the systems of Alice and Bob cannot
signal to each other. This can be achieved by making sure that Alice
and Bob perform their measurements in a space-like separated way (and
therefore signalling is impossible according to the non-signalling
postulate of relativity theory) or even by shielding their apparatus.
Unfortunately, it was proven in \cite{hanggi2010impossibility} that,
no matter what hash function we use, privacy amplification is impossible
if we only impose non-signalling conditions between Alice and Bob
and not within their systems. 

In this letter we reduce the gap between the assumptions of \cite{hanggi2010impossibility}
and the physical relevant assumptions, from an experimental point
of view, which say that the systems can only signal forward in time
within the systems of Alice and Bob. We consider a set of assumptions
which is very close to the conditions above and prove that the impossibility
result of \cite{hanggi2010impossibility} still holds.
\end{abstract}

\section{Introduction and Contribution }

\subsection{Non-signalling cryptography}

In the past few years there was a growing interest in proving the
security of cryptographic protocols, such as quantum key distribution
(QKD) protocols, from the sole assumption that the system on which
the protocol is being executed does not allow for signalling between
Alice and Bob. One way to make sure that this assumption holds is
for Alice and Bob to have secured shielded laboratories, such that
information cannot leak outside. It could also be ensured by performing
Alice's and Bob's measurements in a space-like separated way; this
way, relativity theory predicts the impossibility of signalling between
them. For this reason, such cryptographic protocols are sometimes
called {}``relativistic protocols''. Since the condition that information
cannot leak outside is a necessary condition in any cryptographic
protocol (otherwise the key could just leak out to the adversary,
Eve), basing the security proof on this condition alone will mean
that the protocol has minimal assumptions. 

We consider families of protocols which have two special properties.
First, the security of the protocols is based only on the observed
correlations of Alice's and Bob's measurements outcomes and not on
the physical apparatus they use. I.e., the protocols are device-independent
\cite{mayers1998quantum,pironio2009device}. In device-independent
protocols, we assume that the system of Alice and Bob was prepared
by the adversary Eve. Note that although the system was created by
Eve, Alice and Bob have to be able to make sure that information does
not leak outside by shielding the systems. Alice and Bob therefore
perform some (unknown) measurements on their system and privacy should
be concluded only from the correlations of the outcomes. 

Second, in the protocols that we consider, the adversary is limited
only by the non-signalling principle and not by quantum physics (i.e.,
super-quantum adversary). By combining these two properties together
we can say that quantum physics guarantees the protocol to work, but
the security is completely independent of quantum physics.

\subsection{Systems and correlations}

For two correlated random variables $X,U$ over $\varLambda_{1}\times\varLambda_{2}$,
we denote the conditional probability distribution of $X$ given $U$
by $P_{X|U}(x|u)=Pr(X=x|U=u)$.

A bipartite system is defined by the joint input-output behavior $P_{XY|UV}$
(see Figure \ref{fig:A-two-partite-system}). 

\begin{figure}[H]
\begin{centering}
\begin{tikzpicture}[>=stealth, thick, font=\large]
	\draw[->] (-1.5,1.5) node[anchor=east] {U} --(0,1.5) ;
	\draw[<-] (-1.5,0.5) node[anchor=east] {X} --(0,0.5);
	\draw (0,0) rectangle (3,2); 
	\draw (1.5,1) node {$P_{XY|UV}$};
	\draw[<-] (3,1.5)--(4.5,1.5) node[anchor=west] {V};
	\draw[->] (3,0.5)--(4.5,0.5) node[anchor=west] {Y};
\end{tikzpicture}
\par\end{centering}

\centering{}\caption{\label{fig:A-two-partite-system}A bipartite system}
\end{figure}
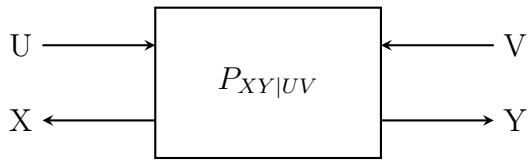

In a system $P_{XY|UV}$ $U$ and $X$ are usually Alice's input and
output respectively, while $V$ and $Y$ are Bob's input and output.
We denote Alice's interface of the system by $X(U)$ and Bob's interface
by $Y(V)$. In a similar way, when considering a tripartite system
$P_{XYZ|UVW}$ Eve's interface of the system is denoted by $Z(W)$.

We are interested in non-local systems - systems which cannot be described
by shared randomness of the parties. Bell proved in \cite{bell64}
that entangled quantum states can display non-local correlations under
measurements. Bell's work was an answer to Einstein, Podolsky, and
Rosen's claim in \cite{EPR} that quantum physics is incomplete and
should be augmented by classical variables determining the behavior
of every system under any possible measurement. In this letter we
deal with a specific type of Bell inequality, called the CHSH inequality
after \cite{CHSH}. 

We can think about the CHSH inequity as a game. In the CHSH game Alice
and Bob share a bipartite system $P_{XY|UV}$. Alice gets a random
input $U$, Bob gets a random input $V$ and the goal is that the
outputs of Alice and Bob, $X$ and $Y$ respectively, will satisfy
$X\oplus Y=U\text{·}V$. For all local systems the probability of
winning the game satisfies $\Pr[X\oplus Y=U\text{·}V]\leq0.75$.
This can be easily seen from the fact that only three out of the four
conditions represented by $\Pr[X\oplus Y=U\cdot V]=1$ can be satisfi{}ed
together. If a system violates the inequality then it is non-local. 
\begin{defn}
(CHSH non-locality). A system $P_{XY|UV}$ is non-local if $\underset{u,v}{\sum}\frac{1}{4}\Pr[X\oplus Y=u\cdot v]>0.75$.
\end{defn}
When measuring entangled quantum states, one can achieve roughly 85\%;
this is a Bell inequality violation. The maximal violation of the
CHSH inequality, i.e. $\underset{u,v}{\sum}\frac{1}{4}\Pr[X\oplus Y=u\cdot v]=1$
for any $u,v$, is achieved by the following system, called a Popescu-Rohrlich
box, or a PR-box \cite{PR-box}.
\begin{defn}
(PR-box). A PR box is the following bipartite system $P_{XY|UV}$:
For each input pair $(u,v)$, the random variables $X$ and $Y$ are
uniform bits and we have $\underset{u,v}{\sum}\frac{1}{4}\Pr[X\oplus Y=u\cdot v]=1$
(see Figure \ref{fig:PR-box}).

\begin{figure}
\begin{centering}
\begin{tikzpicture}[scale=0.5, font=\large]
	\draw[step=2] (-5,-4) grid (4,5);
	\draw[ultra thick] (-6,4)--(4,4);
	\draw[ultra thick] (-6,-4)--(4,-4);
	\draw[ultra thick] (-4,-4)--(-4,6);
	\draw[ultra thick] (4,-4)--(4,6);
	\draw[ultra thick] (-6,0)--(4,0);
	\draw[ultra thick] (0,-4)--(0,6);
	\draw (-4,4)--(-6,6);

	\draw (-3,3) node {$\frac{1}{2}$};
	\draw (-1,3) node {0};
	\draw (1,3) node {$\frac{1}{2}$};
	\draw (3,3) node {0};

	\draw (-3,1) node {0};
	\draw (-1,1) node {$\frac{1}{2}$};
	\draw (1,1) node {0};
	\draw (3,1) node {$\frac{1}{2}$};

	\draw (-3,-1) node {$\frac{1}{2}$};
	\draw (-1,-1) node {0};
	\draw (1,-1) node {0};
	\draw (3,-1) node {$\frac{1}{2}$};

	\draw (-3,-3) node {0};
	\draw (-1,-3) node {$\frac{1}{2}$};
	\draw (1,-3) node {$\frac{1}{2}$};
	\draw (3,-3) node {0};

	\draw (-5,3) node {0};
	\draw (-5,1) node {1};
	\draw (-5,-1) node {0};
	\draw (-5,-3) node {1};
	\draw (-6,2) node {0};
	\draw (-6,-2) node {1};

	\draw (-3,5) node {0};
	\draw (-1,5) node {1};
	\draw (1,5) node {0};
	\draw (3,5) node {1};
	\draw (-2,6) node {0};
	\draw (2,6) node {1};

	\draw (-5,4.4) node {Y};
	\draw (-4.4,5) node {X};
	\draw (-6,5) node {V};
	\draw (-5,6) node {U};
	  
\end{tikzpicture}
\par\end{centering}

\caption{\label{fig:PR-box}PR-box}

\end{figure}

\end{defn}
As seen from Figure \ref{fig:PR-box} the outputs are perfectly random,
and since the correlations are non-local, they cannot be described
by pre-shared randomness. I.e., PR-boxes correspond to perfect secrecy.
This implies that PR-boxes could have been a good resource for cryptographic
protocols. Unfortunately, perfect PR-boxes do not exist in nature;
as was proven by Tsirelson \cite{Tsirelson}, quantum physics is non-local,
but not maximally. Therefore, for a protocol which can be implemented
using quantum systems, we should consider approximations of PR-boxes,
or PR-boxes with some error. For example, an 85\%-approximations can
be achieved with maximally entangled qubits. For a more general treatment
we can define the following.
\begin{defn}
\label{PR-box-error}(Unbiased PR-box with error $\varepsilon$).
An unbiased PR-box with error $\varepsilon$ is the following bipartite
system $P_{XY|UV}$: For each input pair $(u,v)$, the random variables
$X$ and $Y$ are uniform bits and we have $\Pr[X\oplus Y=u\cdot v]=1-\varepsilon$
(see Figure \ref{fig:PR-box-error}).
\end{defn}
Note that the error here is the same error for all inputs. In a similar
way we can define different errors for different inputs.

\begin{figure}
\begin{centering}
\begin{tikzpicture}[scale=0.5, font=\large]
	\draw[step=2] (-5,-4) grid (4,5);
	\draw[ultra thick] (-6,4)--(4,4);
	\draw[ultra thick] (-6,-4)--(4,-4);
	\draw[ultra thick] (-4,-4)--(-4,6);
	\draw[ultra thick] (4,-4)--(4,6);
	\draw[ultra thick] (-6,0)--(4,0);
	\draw[ultra thick] (0,-4)--(0,6);
	\draw (-4,4)--(-6,6);

	\draw (-3,3) node {$\frac{1}{2}-\frac{\varepsilon}{2}$};
	\draw (-1,3) node {$\frac{\varepsilon}{2}$};
	\draw (1,3) node {$\frac{1}{2}-\frac{\varepsilon}{2}$};
	\draw (3,3) node {$\frac{\varepsilon}{2}$};

	\draw (-3,1) node {$\frac{\varepsilon}{2}$};
	\draw (-1,1) node {$\frac{1}{2}-\frac{\varepsilon}{2}$};
	\draw (1,1) node {$\frac{\varepsilon}{2}$};
	\draw (3,1) node {$\frac{1}{2}-\frac{\varepsilon}{2}$};

	\draw (-3,-1) node {$\frac{1}{2}-\frac{\varepsilon}{2}$};
	\draw (-1,-1) node {$\frac{\varepsilon}{2}$};
	\draw (1,-1) node {$\frac{\varepsilon}{2}$};
	\draw (3,-1) node {$\frac{1}{2}-\frac{\varepsilon}{2}$};

	\draw (-3,-3) node {$\frac{\varepsilon}{2}$};
	\draw (-1,-3) node {$\frac{1}{2}-\frac{\varepsilon}{2}$};
	\draw (1,-3) node {$\frac{1}{2}-\frac{\varepsilon}{2}$};
	\draw (3,-3) node {$\frac{\varepsilon}{2}$};

	\draw (-5,3) node {0};
	\draw (-5,1) node {1};
	\draw (-5,-1) node {0};
	\draw (-5,-3) node {1};
	\draw (-6,2) node {0};
	\draw (-6,-2) node {1};

	\draw (-3,5) node {0};
	\draw (-1,5) node {1};
	\draw (1,5) node {0};
	\draw (3,5) node {1};
	\draw (-2,6) node {0};
	\draw (2,6) node {1};

	\draw (-5,4.4) node {Y};
	\draw (-4.4,5) node {X};
	\draw (-6,5) node {V};
	\draw (-5,6) node {U};
	  
\end{tikzpicture}
\par\end{centering}

\caption{\label{fig:PR-box-error}Unbiased PR-box with error $\varepsilon$}
\end{figure}

Using this notation, systems $P_{XY|UV}$ which approximate the PR-box
with error $\varepsilon\in[0,0.25)$ are non-local. For a proof that
any unbiased PR-box with error $\varepsilon<0.25$ {}``holds'' some
secrecy, see for example Lemma 5 in \cite{hanggi2009quantum}. While
PR-Boxes correspond to perfect secrecy, PR-boxes with error correspond
to partial secrecy. The problem is that the amount of secrecy (defined
formally in Section \ref{sub:Distance-measures}) which can be achieved
from a quantum system is not enough for our purposes. Therefore we
must have some privacy amplification protocol in order for such systems
to be useful.

\subsection{Privacy amplification}

In the privacy amplification problem we consider the following scenario.
Alice and Bob share information that is only partially secret with
respect to an adversary Eve. Their goal is to distill this information
to a shorter string, the key, that is completely secret. The problem
was introduced in \cite{bennett1985privacy,bennett1995generalized}
for classical adversaries and in \cite{konigQuantumPA} for quantum
adversaries. In our case, Alice and Bob want to create a secret key
using a system $P_{XY|UV}$ while Eve, who is only limited by the
non-signalling principle, tries to get some information about it. 

Assume that Alice and Bob share a system from which they can create
a partially secret bit string $X$. Information theoretically, if
there is some entropy in one system, we can hope that by using several
systems we will have enough entropy to create a more secure key. The
idea behind privacy amplification is to consider Alice's and Bob's
system as a black box, take several such systems which will produce
several partially secret bit strings $X_{1},...,X_{n}$ and then apply
some hash function $f$ (which might take a short random seed as an
additional input) to $X_{1},...,X_{n}$, in order to receive a shorter
but more secret bit string $K$, which will act as the key. 

The amount of secrecy, as will be defined formally in Section \ref{sub:Distance-measures},
is usually measured by the distance of the actual system of Alice,
Bob and Eve from an ideal system, in which the key is uniformly distributed
and not correlated to the information held by Eve. We will denote
this distance by $d(K|E)$, where $E$ is Eve's system. We say that
a system generating a key is $\epsilon$-indistinguishable from an
ideal system if $d(K|E)\leq\epsilon$ for some small $\epsilon>0$.
Therefore, the problem of privacy amplification is actually the problem
of finding such a `good' function $f$. 

Privacy amplification is said to be possible when $\epsilon$ is a
decreasing function of $n$, the number of systems held by Alice and
Bob. In order to prove an impossibility result it is enough to give
a specific system, in which each of the subsystems holds some secrecy,
but this secrecy cannot be amplified by using any hash function -
the distance from uniform remains high, no matter what function Alice
and Bob apply to their output bits and how many systems they share. 

In the classical scenario, this problem can be solved almost optimally
by extractors \cite{nisan1999extracting,shaltiel2011introduction}.
Although not all classical extractors work against quantum adversaries
\cite{gavinsky2006exponential}, some very good extractors do, for
example, \cite{de2009trevisan}. Since we consider a super-quantum
adversary, we cannot assume that protocols which work for the classical
and quantum case, will stay secure against a more powerful adversary.
Therefore a different treatment is needed when considering non-signalling
adversaries.

\subsection{Related work }

Barrett, Hardy, and Kent have shown in \cite{barrett2005no} a protocol
for QKD which is based only on the assumption that Alice and Bob cannot
signal to each other. Unfortunately, the suggested protocol cannot
tolerate any errors caused by noise in the quantum channel and is
inefficient in the number of quantum systems used in order to produce
one secure bit. This problem could have been solved by using a privacy
amplification protocol, which works even when the adversary is limited
only by the non-signalling principle. Unfortunately, it was proven
in \cite{hanggi2010impossibility} that such a privacy amplification
protocol does not exist if signalling is possible within the laboratories
of Alice and Bob. 

On the contrary, in \cite{hanggi2009quantum}, \cite{masanes2009universally}
and \cite{masanes2011secure} it was proven that if we assume full
non-signalling conditions, i.e., that any subset of systems cannot
signal to any other subset of systems, QKD which is based only on
the non-locality of the correlations is possible. In particular, the
step of privacy amplification is possible. 

In the gap between these two extreme cases little has been known.
There is one particular set of assumptions of special interest from
an experimental point of view; the set of assumptions which says that
the systems can only signal forward in time within the systems of
Alice and Bob. For this setting it was only known that privacy amplification
using the XOR or the AND function is impossible \cite{MasanesXORimpossibility}.

\subsection{Contribution}

In this letter we reduce the gap between the assumptions of \cite{hanggi2010impossibility},
in which signalling is impossible only between Alice and Bob, and
the physical relevant assumptions which says that the systems can
only signal forward in time within the systems of Alice and Bob. We
consider a set of assumptions which is very close to the conditions
which only allow to signal forward in time and prove that the impossibility
result of \cite{hanggi2010impossibility} still holds. 

Since our set of assumptions differs only a bit from the assumptions
of signalling only forward in time, called {}``backward non-signalling'',
we can highlight the specific assumptions which might make the difference
between possibility and impossibility results. If the adversary does
not necessarily need to exploit these specific assumptions, then privacy
amplification will be impossible also in the physical assumptions
of {}``backward non-signalling'' systems. On the other hand, if
privacy amplification will be proved to be possible we will know that
the power of the adversary arises from these assumptions. 

The proof given here is an extension of the proof in \cite{hanggi2010impossibility}.
We prove that the adversarial strategy suggested in \cite{hanggi2010impossibility}
is still valid under stricter non-signalling assumptions (Theorem
\ref{thm:main}), and as a consequence also under the assumption of
an {}``almost backward non-signalling'' system (Corollary \ref{cor:seq-cor}).
This will imply that privacy amplification against non-signalling
adversaries is impossible under our stricter assumptions (which include
a lot more non-signalling conditions than in \cite{hanggi2010impossibility}),
as stated formally in Theorem \ref{thm:main}.

\subsection{Outline }

The rest of this letter is organized as follows. In Section \ref{sec:Preliminaries}
we describe several different non-signalling conditions and explain
the model of non-signalling adversaries. In Section \ref{sec:The-Underlying-System}
we define a specific system which respects many non-signalling conditions
and yet we cannot use privacy amplification in order to create an
arbitrary secure bit from it. In addition, we prove that an impossibility
result for our set of assumptions implies an impossibility result
for {}``almost backward non-signalling'' systems (Corollary \corref{seq-cor}).
In Section \ref{sec:Privacy-Amplification-Against} we prove our main
theorem, Theorem \ref{thm:main}. We conclude in Section \ref{sec:Concluding-Remarks}.

\section{Preliminaries \label{sec:Preliminaries}}

\subsection{Notations}

We denote the set $\{1,...,n\}$ by $[n]$. For any string $x\in\{0,1\}^{n}$
and any subset $I\subseteq[n]$, $x_{i}$ stands for the i'th bit
of $x$ and $x_{I}\in\{0,1\}^{|I|}$ stands for the string formed
by the bits of $x$ at the positions given by the elements of $I$.
$\overline{I}$ is the complementary set of $I$, i.e., $\overline{I}=[n]/I$.
$x_{\overline{i}}$ is the string formed by all the bits of $x$ except
for the i'th bit. 

For two correlated random variables $X,U$ over $\varLambda_{1}\times\varLambda_{2}$,
we denote the conditional probability distribution of $X$ given $U$
as $P_{X|U}(x|u)=Pr(X=x|U=u)$.

\subsection{Non-signalling systems\label{sub:Non-signaling-systems} \label{sub:Different-non-signaling}}

We start by formally defining the different types of non-signalling
systems and conditions which will be relevant in this letter. 
\begin{defn}
\noindent \label{n.s.-def}(Fully non-signalling system). An n-party
conditional probability distribution $P_{X|U}$ over $X,U\in\{0,1\}^{n}$
is called a fully non-signalling system if for any set $I\subseteq[n]$,
\[
\forall x_{\overline{I}},u_{I},u'_{I},u_{\overline{I}}\underset{x_{I}\in\{0,1\}^{|I|}}{\sum}P_{X|U}(x_{I},x_{\overline{I}}|u_{I},u_{\overline{I}})=\underset{x_{I}\in\{0,1\}^{|I|}}{\sum}P_{X|U}(x_{I},x_{\overline{I}}|u'_{I},u_{\overline{I}})\,.
\]

\end{defn}
This definition implies that any group of parties cannot infer from
their part of the system which inputs were given by the other parties.
A measurement of a subset $I$ of the parties does not change the
statistics of the outcomes of parties $\overline{I}$; the marginal
system they see is the same for all inputs of the other parties. This
means that different parties cannot signal to other parties using
only the system. Note that this type of condition is not symmetric.
The fact that parties $I$ cannot signal to parties $\overline{I}$
does not imply that parties $\overline{I}$ cannot signal to parties
$I$. The fully non-signalling conditions can also be written in the
following way. 
\begin{lem}
\noindent \label{lem:n.s.-equiv-def}(Lemma 2.7 in \cite{hanggi2010device}).
An n-party system $P_{X|U}$ over $X,U\in\{0,1\}^{n}$ is a fully
non-signalling system if and only if for all $i\in[n]$, 
\[
\forall x_{\overline{i}},u_{i},u'_{i},u_{\overline{i}}\underset{x_{i}\in\{0,1\}}{\sum}P_{X|U}(x_{i},x_{\overline{i}}|u_{i},u_{\overline{i}})=\underset{x_{i}\in\{0,1\}}{\sum}P_{X|U}(x_{i},x_{\overline{i}}|u'_{i},u_{\overline{i}})\,.
\]

\end{lem}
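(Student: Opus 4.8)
The plan is to prove both directions of the equivalence, with the forward direction being essentially trivial and the reverse direction being the substantive part. For the forward direction, suppose $P_{X|U}$ is fully non-signalling in the sense of Definition \ref{n.s.-def}. Then the displayed condition for a single index $i$ is simply the special case $I=\{i\}$ of the condition in Definition \ref{n.s.-def}, so there is nothing to prove. The real content is the converse: assuming the single-index marginal condition for every $i\in[n]$, I must recover the condition for an arbitrary subset $I\subseteq[n]$.

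For the reverse direction, I would proceed by induction on $|I|$, building up a change of inputs on $I$ one coordinate at a time. Fix $I=\{i_1,\dots,i_k\}\subseteq[n]$, fix $x_{\overline I}$ and $u_{\overline I}$, and fix two input strings $u_I,u'_I$ on the coordinates of $I$. I want to show $\sum_{x_I}P_{X|U}(x_I,x_{\overline I}\mid u_I,u_{\overline I})=\sum_{x_I}P_{X|U}(x_I,x_{\overline I}\mid u'_I,u_{\overline I})$. The idea is to interpolate: define a sequence of input strings $u^{(0)}=u_I, u^{(1)},\dots,u^{(k)}=u'_I$ on $I$, where $u^{(j)}$ agrees with $u'_I$ on the first $j$ coordinates $i_1,\dots,i_j$ and with $u_I$ on the rest. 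Consecutive strings $u^{(j-1)}$ and $u^{(j)}$ differ only in coordinate $i_j$. To pass from step $j-1$ to step $j$, I apply the single-index hypothesis at $i=i_j$, but with the "environment" outside $i_j$ taken to be the combined string $(u^{(j)}_{I\setminus\{i_j\}}, u_{\overline I})$ (note $u^{(j-1)}$ and $u^{(j)}$ share this environment since they only disagree at $i_j$). The single-index condition, applied after first summing $P_{X|U}$ over all coordinates $x_{I\setminus\{i_j\}}$ — which is legitimate because the hypothesis is an equality of functions of $x_{\overline i_j}$ that we may freely sum over any subset of those coordinates — gives equality of the two consecutive partial marginals. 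Chaining the $k$ equalities yields the desired identity.

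The step I expect to require the most care is the bookkeeping in the inductive chaining: one must be careful that when the single-index condition at $i_j$ is invoked, the quantity being equated is exactly a marginal of the form appearing at stage $j-1$ and stage $j$, and that summing the single-index equality over the spectator coordinates $x_{I\setminus\{i_j\}}$ preserves the equality. This is routine but notationally heavy, since one is simultaneously partially summing over outputs and partially fixing inputs. No genuine obstacle arises — the linearity of the marginalization over outputs is what makes the argument go through — so the main task is to set up clean notation for the interpolating sequence $u^{(0)},\dots,u^{(k)}$ and verify the telescoping carefully.
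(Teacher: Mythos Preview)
Your proposal is correct. The paper does not actually supply a proof of this lemma; it simply cites it as Lemma~2.7 in \cite{hanggi2010device}. Your argument---taking $I=\{i\}$ for the forward direction, and for the converse interpolating between $u_I$ and $u'_I$ by flipping one input coordinate at a time and summing the single-index identity over the remaining output coordinates in $I$---is the standard proof, and the bookkeeping you flag is indeed routine.
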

In order to make sure that the fully non-signalling conditions as
in Definition \ref{n.s.-def} hold one will have to create the system
such that each of the $2n$ subsystems is space-like separated from
all the others, or shielded, to exclude signalling. This is of course
impractical from an experimental point of view. Therefore, we need
to consider more practical, weaker, conditions. A minimal requirement
needed for any useful system is that Alice cannot signal to Bob and
vice versa%
\footnote{If we will not ensure this condition, say by making sure that they
are in space-like separated regions or by shielding their systems,
the measured Bell violation will have no meaning and any protocol
based on some kind of non locality will fail%
}. We stress that this is an assumption, since the non-signalling condition
cannot be tested (not even with some small error) using a parameter
estimation protocol as a previous step. This assumption can be justified
physically by shielding the systems or by performing the measurements
in a space-like separated way.
\begin{defn}
\label{Alice-&-Bob-n.s.}(Non-signalling between Alice and Bob). A
$2n$-party conditional probability distribution $P_{XY|UV}$ over
$X,Y,U,V\in\{0,1\}^{n}$ does not allow for signalling from Alice
to Bob if 
\[
\forall y,u,u',v\quad\underset{x}{\sum}P_{XY|UV}(x,y|u,v)=\underset{x}{\sum}P_{XY|UV}(x,y|u',v)
\]
and does not allow for signalling from Bob to Alice if 
\[
\forall x,v,v',u\quad\underset{y}{\sum}P_{XY|UV}(x,y|u,v)=\underset{y}{\sum}P_{XY|UV}(x,y|u,v')\,.
\]

\end{defn}
On top of the assumption that Alice and Bob cannot signal to each
other, we can now add different types of non-signalling conditions.
In a more mathematical way, we can think about it as follows. The
full non-signalling conditions are a set of linear equations as in
Definition \ref{n.s.-def} and Lemma \ref{lem:n.s.-equiv-def}. We
can assume that all of these equations hold (this is the full non-signalling
scenario) or we can use just a subset (which does not span the whole
set) of these equations. 

One type of systems which are physically interesting are the systems
which can only signal forward in time (messages cannot be sent to
the past). This can be easily achieved by measuring several quantum
systems one after another, and therefore these are the non-signalling
conditions that one {}``gets for free'' when performing an experiment
of QKD. For example, an entanglement-based protocol in which Alice
and Bob receive entangled photons and measure them one after another
using the same apparatus will lead to the conditions of Definition
\ref{alternative-seq-n.s.}. If the apparatus has memory signalling
is possible from $A_{i}$ to $A_{i+1}$ for example. However, signals
cannot go outside from Alice's laboratory to Bob's laboratory. Formally,
we use the following definition for backward non-signalling systems.
\begin{defn}
\label{alternative-seq-n.s.}(Backward non-signalling system). For
any $i\in\{2,...,n-1\}$ denote the set $\{1,...,i-1\}$ by $I_{1}$
and the set $\{i,...,n\}$ by $I_{2}$. A $2n$-party conditional
probability distribution $P_{XY|UV}$ over $X,Y,U,V\in\{0,1\}^{n}$
is a backward non-signalling system (does not allow for signalling
backward in time) if for any $i\in[n]$, 
\begin{eqnarray*}
\forall x_{I_{1}},y,u_{I_{1}},u_{I_{2}},u'_{I_{2}},v\quad\underset{x_{I_{2}}}{\sum}P_{XY|UV}(x_{I_{1}},x_{I_{2}},y|u_{I_{1}},u_{I_{2}},v) & = & \underset{x_{I_{2}}}{\sum}P_{XY|UV}(x_{I_{1}},x_{I_{2}},y|u_{I_{1}},u'_{I_{2}},v)\\
\forall x,y_{I_{1}},u,v_{I_{1}},v_{I_{2}},v'_{I_{2}}\quad\underset{y_{I_{2}}}{\sum}P_{XY|UV}(x,y_{I_{1}},y_{I_{2}}|u,v_{I_{1}},v_{I_{2}}) & = & \underset{y_{I_{2}}}{\sum}P_{XY|UV}(x,y_{I_{1}},y_{I_{2}}|u,v_{I_{1}},v'_{I_{2}}).
\end{eqnarray*}

\end{defn}
\begin{onehalfspace}
In order to understand why these are the conditions that we choose
to call {}``backward non-signalling'' note that in these conditions
Alice's (and analogously Bob's) systems $A_{I_{2}}$ cannot signal
not only to $A_{I_{1}}$, but even to $A_{I_{1}}$ and all of Bob's
systems together. I.e., $A_{I_{2}}$ cannot change the statistics
of $A_{I_{1}}$ and $B$, even if they are collaborating with one
another. Another way to see why these conditions make sense, is to
consider a scenario in which Bob, for example, performs all of his
measurements first. This of course should not change the results of
the experiment since Alice and Bob are separated and cannot send signals
to each other. Therefore when Alice performs her measurements on the
systems $A_{I_{2}}$, her outcomes cannot impact the statistics of
both $A_{I_{1}}$ and $B$ together. 
\end{onehalfspace}

In this letter we consider a different set of conditions, which does
not allow for most types of signalling to the past. 
\begin{defn}
\label{sequential-signaling}(Almost backward non-signalling system).
For any $i\in\{2,...,n-1\}$ denote the set $\{1,...,i-1\}$ by $I_{1}$
and the set $\{i,...,n\}$ by $I_{2}$. A $2n$-party conditional
probability distribution $P_{XY|UV}$ over $X,Y,U,V\in\{0,1\}^{n}$
is an almost backward non-signalling system if for any $i\in[n]$,
\begin{align*}
\forall x_{I_{1}},y_{I_{1}},u_{I_{1}},u_{I_{2}},u'_{I_{2}},v_{I_{1}},v_{I_{2}},v'_{I_{2}}\hphantom{------}\\
\underset{x_{I_{2}},y_{I_{2}}}{\sum}P_{XY|UV}(x_{I_{1}},x_{I_{2}},y_{I_{1}},y_{I_{2}}|u_{I_{1}},u_{I_{2}},v_{I_{1}},v_{I_{2}}) & =\underset{x_{I_{2}},y_{I_{2}}}{\sum}P_{XY|UV}(x_{I_{1}},x_{I_{2}},y_{I_{1}},y_{I_{2}}|u_{I_{1}},u'_{I_{2}},v_{I_{1}},v'_{I_{2}}).
\end{align*}

Figure \ref{fig:Different-non-signaling} visualizes the difference
between all of these non-signalling conditions. 
\end{defn}
\begin{figure}
\begin{tikzpicture} [font=\large]
	\begin{scope}
		\draw (0,0) node {$A_{n}$};
		\draw (0,0.8) node {.};
		\draw (0,1) node {.};
		\draw (0,1.2) node {.};
		\draw (0,2) node {$A_{3}$};
		\draw (0,3) node {$A_{2}$};
		\draw (0,4) node {$A_{1}$};

		\draw (2,0) node {$B_{n}$};
		\draw (2,0.8) node {.};
		\draw (2,1) node {.};
		\draw (2,1.2) node {.};
		\draw (2,2) node {$B_{3}$};
		\draw (2,3) node {$B_{2}$};
		\draw (2,4) node {$B_{1}$};

		\draw[color=red, dashed, thick] (1,-0.25) -- (1,4.25);
		\draw[<->, color=red, thick] (0.5,2) -- (1.5,2);
		\draw (1,-0.75) node[font=\small] {(a)};
	\end{scope}

	\begin{scope}[shift={(4,0)}]
		\draw (0,0) node {$A_{n}$};
		\draw (0,0.8) node {.};
		\draw (0,1) node {.};
		\draw (0,1.2) node {.};
		\draw (0,2) node {$A_{3}$};
		\draw (0,3) node {$A_{2}$};
		\draw (0,4) node {$A_{1}$};

		\draw (2,0) node {$B_{n}$};
		\draw (2,0.8) node {.};
		\draw (2,1) node {.};
		\draw (2,1.2) node {.};
		\draw (2,2) node {$B_{3}$};
		\draw (2,3) node {$B_{2}$};
		\draw (2,4) node {$B_{1}$};

		\draw[color=red, dashed, thick] (-0.25,2.5) -- (2.25,2.5);
		\draw[->, color=red, thick] (1.25,2.1) -- (1.25,2.9);
		\draw[<-, decorate, decoration={snake, amplitude=.4mm, segment length=2mm}, color=blue, thick] (0.75,2.1) -- (0.75,2.9);
		\draw (1,-0.75) node[font=\small] {(b)};
	\end{scope}
	
	\begin{scope}[shift={(8,0)}]
		\draw (0,0) node {$A_{n}$};
		\draw (0,0.8) node {.};
		\draw (0,1) node {.};
		\draw (0,1.2) node {.};
		\draw (0,2) node {$A_{3}$};
		\draw (0,3) node {$A_{2}$};
		\draw (0,4) node {$A_{1}$};

		\draw (2,0) node {$B_{n}$};
		\draw (2,0.8) node {.};
		\draw (2,1) node {.};
		\draw (2,1.2) node {.};
		\draw (2,2) node {$B_{3}$};
		\draw (2,3) node {$B_{2}$};
		\draw (2,4) node {$B_{1}$};

		\draw[color=red, dashed, thick] (-0.25,2.5) -- (0.5,2.5);
		\draw[color=red, dashed, thick] (0.5,2.5) -- (0.5,-0.25);
		\draw[->, color=red, thick] (0.25,1.5) -- (0.75,2);

		\draw[color=red, dashed, thick] (1.5,2.5) -- (2.25,2.5);
		\draw[color=red, dashed, thick] (1.5,2.5) -- (1.5,-0.25);
		\draw[->, color=red, thick] (1.75,1.5) -- (1.25,2);

		\draw (1,-0.75) node[font=\small] {(c)};
	\end{scope}

	\begin{scope}[shift={(12,0)}, color=red, thick, text=black]
		\draw (0,0) node[draw, dashed] {$A_{n}$};
		\draw[->] (0.25,0) -- (0.75,0.2);
		\draw (0,0.8) node {.};
		\draw (0,1) node {.};
		\draw (0,1.2) node {.};
		\draw (0,2) node[draw, dashed] {$A_{3}$};
		\draw[->] (0.25,2) -- (0.75,2);
		\draw (0,3) node[draw, dashed] {$A_{2}$};
		\draw[->] (0.25,3) -- (0.75,3);
		\draw (0,4) node[draw, dashed] {$A_{1}$};
		\draw[->] (0.25,4) -- (0.75,3.8);

		\draw (2,0) node[draw, dashed] {$B_{n}$};
		\draw[->] (1.75,0) -- (1.25,0.2);
		\draw (2,0.8) node {.};
		\draw (2,1) node {.};
		\draw (2,1.2) node {.};
		\draw (2,2) node[draw, dashed] {$B_{3}$};
		\draw[->] (1.75,2) -- (1.25,2);
		\draw (2,3) node[draw, dashed] {$B_{2}$};
		\draw[->] (1.75,3) -- (1.25,3);
		\draw (2,4) node[draw, dashed] {$B_{1}$};
		\draw[->] (1.75,4) -- (1.25,3.8);

		\draw (1,-0.75) node[font=\small] {(d)};
	\end{scope}

\end{tikzpicture}

\caption{\label{fig:Different-non-signaling}Different non-signalling conditions:
signalling is impossible in the direction of the straight red arrow.
(a) Non-signalling between Alice and Bob. (b) The conditions of Definition
\ref{sequential-signaling}, almost backward non-signalling conditions,
for $i=3$. Note that signalling may be possible in the direction
of the curly blue arrow. (c) The conditions of Definition \ref{alternative-seq-n.s.},
backward non-signalling conditions, for $i=3$. (d) Full non-signalling
conditions. The conditions we consider are the combination of (a)
and (b).}
\end{figure}
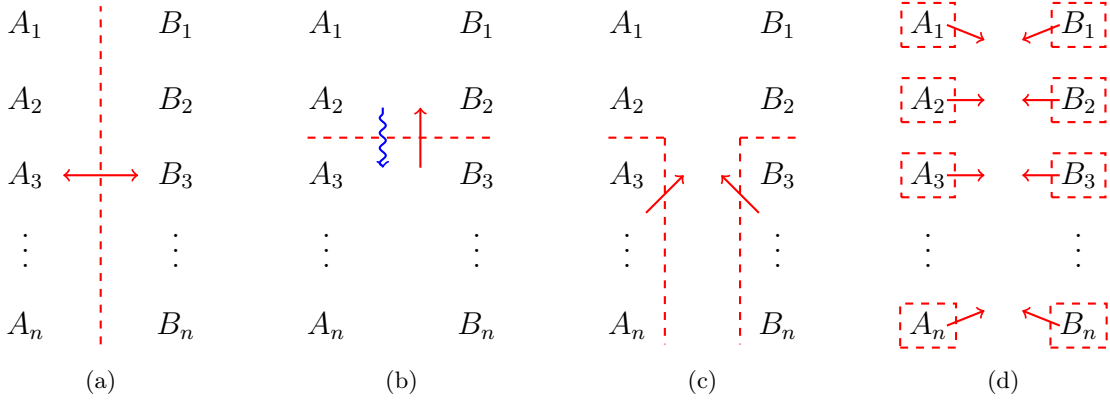

The difference between the conditions of Definition \ref{alternative-seq-n.s.}
and Definition \ref{sequential-signaling} is that when assuming the
conditions of an almost backward non-signalling system signalling
is not forbidden from $A_{i}$ to $B_{i}$ and $A_{j}$ together for
any $i$ and $j<i$. I.e., if $A_{i}$ wants to signal to some system
in the past, $A_{j}$, $B_{i}$ has to cooperate with $A_{j}$. To
see this consider the following system for example. Alice and Bob
share a system $P_{XY|UV}$ for $X,Y,U,V\in\{0,1\}^{2}$. We define
the system such that each of the outputs is a perfectly random bit
and independent of any input, except for $X_{1}$, which is equal
to $Y_{2}\oplus U_{2}$. Obviously, the outputs on Bob’s side look
completely random and independent of any input, i.e., the system is
non-signalling from Alice to Bob. Now note that whenever we do not
have access to $Y_{2}$, $X_{1}$ also looks like a perfectly random
bit and independent of the input. Therefore, the system is also non-signalling
from Bob to Alice, and almost backward non-signalling. However, the
conditions of Definition \ref{alternative-seq-n.s.} does not hold,
since the input $U_{2}$ can be perfectly known from $X_{1}$ and
$Y_{2}$ (i.e. $A_{2}$ can signal $A_{1}$ and $B_{2}$ together).

For every system $P_{XY|UV}$ which fulfills some arbitrary non-signalling
conditions we can define marginal systems and extensions to the system
in the following way. 
\begin{defn}
(Marginal system). A system $P_{X|U}$ is called a marginal system
of the system $P_{XZ|UW}$ if $\forall x,u,w\quad P_{X|U}(x|u)=\underset{z}{\sum}P_{XZ|UW}(x,z|u,w)$.
\end{defn}
Note that for the marginal system $P_{X|U}$ of $P_{XZ|UW}$ to be
defined properly, all we need is a non-signalling condition between
the parties holding $X(U)$ and the parties holding $Z(W)$. 
\begin{defn}
\noindent (Extension system). A system $P_{XZ|UW}$ is called an extension
to the system $P_{X|U}$, which fulfills some arbitrary set of non-signalling
conditions $\mathcal{C}$, if:
\begin{enumerate}
\item $P_{XZ|UW}$ does not allow for signalling between the parties holding
$X(U)$ and the parties holding $Z(W)$.
\item \noindent The marginal system of $P_{XZ|UW}$ is $P_{X|U}$.
\item For any $z$ the system $P_{X|U}^{Z=z}$ fulfills the same non-signalling
conditions $\mathcal{C}$.
\end{enumerate}
\end{defn}
Note that for every system $P_{X|U}$ there are many different extensions.
Next, in an analogous way to the definition of a classical-quantum
state, $\rho_{XE}=\underset{x}{\sum}P_{X}(x)|x\rangle\langle x|\otimes\rho_{E}^{x}$,
we would like to define a classical-non-signalling system. 
\begin{defn}
\noindent (Classical - non-signalling system). A classical - non-signalling
(c-n.s.) system is a system $P_{XZ|UW}$ such that $|U|=1$. 
\end{defn}
\noindent We can think about it as a system in which some of the parties
cannot choose or change the input on their side of the system. When
it is clear from the context which side of the system is classical
and which side is not we drop the index which indicates the trivial
choice for $U$ and just write $P_{XZ|W}$. Notice that for a general
system with some $U$, after choosing an input $u_{i}\in U$ we get
the c-n.s. system $P_{XZ|U=u_{i},W}$.

\subsection{Distance measures\label{sub:Distance-measures}}

In general, the distance between any two systems $P_{X|U}$ and $Q_{X|U}$
can be measured by introducing another entity - the distinguisher.
Suppose $P_{X|U}$ and $Q_{X|U}$ are two known systems. The distinguisher
gets one of these systems, $S$, and has to guess which system he
was given. In the case of our non-signalling systems, the distinguisher
can choose which measurements to make (which inputs to insert to the
system) and to see all the outputs. He then outputs a bit $B$ with
his guess. The distinguishing advantage between systems $P_{X|U}$
and $Q_{X|U}$ is the maximum guessing advantage the best distinguisher
can have.
\begin{defn}
(Distinguishing advantage). The distinguishing advantage between two
systems $P_{X|U}$ and $Q_{X|U}$ is 
\[
\delta(P_{X|U},Q_{X|U})=\underset{D}{max}[P(B=1|S=P_{X|U})-P(B=1|S=Q_{X|U})]
\]
where the maximum is over all distinguishers $D$, $S$ is the system
which is given to the distinguisher and $B$ is its output bit. Two
systems $P_{X|U}$ and $Q_{X|U}$ are called $\epsilon$-indistinguishable
if $\delta(P_{X|U},Q_{X|U})\leq\epsilon$.
\end{defn}
If the distinguisher is given an n-party system for $n>1$ he can
choose not only the n inputs but also the order in which he will insert
them. The distinguisher can be adaptive, i.e., after choosing an input
and seeing an output he can base his later decisions for the following
inputs on the results seen so far. Therefore the maximization in this
case will be on the order of the measurements and their values. 

If the distinguisher is asked to distinguish between two c-n.s. systems
we can equivalently write the distinguishing advantage as in the following
lemma.
\begin{lem}
\noindent \label{lem:distance c-n.s}(Distinguishing advantage between
two c-n.s. systems). The distinguishing advantage between two c-n.s
systems $P_{KZ|W}$ and $Q_{KZ|W}$ is 
\[
\text{\ensuremath{\delta}}(P_{KZ|W},Q_{KZ|W})=\underset{k}{\sum}\underset{w}{max}\underset{z}{\sum}\biggm|P_{KZ|W=w}(k,z)-Q_{KZ|W=w}(k,z)\biggm|.
\]
\end{lem}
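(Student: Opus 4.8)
The plan is to prove Lemma~\ref{lem:distance c-n.s} by directly analyzing the optimal distinguisher strategy for c-n.s.\ systems. Since the systems are classical on the $K$-side (the input $W$ is the only nontrivial input, held by the party who also produces $Z$), the distinguisher's only freedom is to choose an input $w$, observe the full output pair $(k,z)$, and then output a guess bit $B$. First I would fix any distinguisher $D$ and write $P(B=1\mid S=P_{KZ|W})$ as a sum over $(k,z)$ of the probability that the system outputs $(k,z)$ times the (possibly randomized) probability that $D$ outputs $1$ on seeing $(k,z)$. The subtlety is the choice of $w$: a priori $D$ could choose $w$ \emph{adaptively} or randomly, but since there is only one round of interaction on the non-classical side and the classical side has no input, the choice of $w$ happens before any output is seen, so without loss of generality $D$ picks a single fixed $w$ (randomizing over $w$ can only be a convex combination, hence no better than the best single choice).

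Next I would observe that for a fixed $w$, the optimal decision rule is the standard likelihood-ratio test: output $B=1$ exactly on those outputs $(k,z)$ for which $P_{KZ|W=w}(k,z) > Q_{KZ|W=w}(k,z)$. This yields, for fixed $w$, an advantage equal to $\sum_{k,z}\max\{P_{KZ|W=w}(k,z)-Q_{KZ|W=w}(k,z),\,0\}$, which by the usual identity equals $\tfrac12\sum_{k,z}\lvert P_{KZ|W=w}(k,z)-Q_{KZ|W=w}(k,z)\rvert$ when both are probability distributions over $(k,z)$ — but here one must be careful, because in a c-n.s.\ system the distributions $P_{KZ|W=w}$ need not be normalized the same way across different $w$ in the sense relevant to the optimization; in fact they \emph{are} each normalized (they are genuine conditional distributions given $w$), so the positive-part sum is the right quantity. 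Then the distinguisher, wanting to maximize, should pick the $w$ achieving the largest such value.

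The remaining step is to match this with the claimed formula, which has the form $\sum_{k}\max_{w}\sum_{z}\lvert\cdots\rvert$ — note the maximum over $w$ sits \emph{inside} the sum over $k$. This is the one genuinely non-obvious point: it says the distinguisher may choose a \emph{different} input $w$ for each value of $k$. The reconciliation is that $K$ is classical and is revealed as an \emph{output}, so one should think of the distinguisher as being allowed, in the relevant notion of distance for this setting (which the paper has set up so that such ``multiple-copy'' or ``per-symbol'' access is permitted), to effectively run the test conditioned on each $k$ with its own optimal $w$. So the key steps are: (i)~reduce to a single fixed $w$ per run and then argue the stronger per-$k$ statement follows from the way c-n.s.\ distance is defined; (ii)~apply the likelihood-ratio optimality for each $k$ separately, giving $\max_w\sum_z(\text{positive part})$; (iii)~use that for each fixed $(k,w)$ the sum over $z$ of positive parts equals the sum over $z$ of $\lvert P-Q\rvert$ up to the factor coming from how the $k$-marginals compare, and verify the normalization bookkeeping so that the stated formula (without an explicit $\tfrac12$) comes out correctly.

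The main obstacle I anticipate is precisely this bookkeeping around the $\max_w$ being inside the $\sum_k$: one has to pin down exactly what access the distinguisher has to a c-n.s.\ system (a single query returning $(k,z)$, versus something that licenses a separate optimal input for each $k$) and confirm it against the definitions given earlier; getting the constant right (absence of the usual factor $\tfrac12$) hinges on the fact that $\sum_z\lvert P_{KZ|W=w}(k,z) - Q_{KZ|W=w}(k,z)\rvert$ is being summed without first normalizing the $k$-slices, so the ``positive part = half the absolute value'' identity must be applied at the level of the full $(k,z)$ distribution for each $w$, and then the per-$k$ decomposition recovers exactly the asserted expression. I would also double-check that adaptivity buys nothing here, which is immediate since there is effectively only one measurement round on the side that has a choice of input.
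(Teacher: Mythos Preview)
There is a genuine gap, and it is precisely where you flag your own discomfort. Your assertion that ``the choice of $w$ happens before any output is seen, so without loss of generality $D$ picks a single fixed $w$'' is wrong, and your closing remark that ``adaptivity buys nothing here'' is exactly backwards. A c-n.s.\ system has \emph{two} interfaces: the classical $K$-side (trivial input) and the $Z(W)$-side. The distinguisher is free to query them in either order, and since the $K$-side requires no input, nothing prevents the distinguisher from reading $K$ \emph{first}, observing the value $k$, and only then choosing the input $w=w(k)$ on the other interface. This single-round adaptivity is the entire content of the lemma and is what places $\max_w$ inside $\sum_k$.

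Once you see this, the argument is one line: read $K$, get $k$; choose $w(k)$; read $Z$; output $B$ according to the likelihood-ratio rule on the pair $(k,z)$. Optimizing $w(k)$ separately for each $k$ gives the stated formula. Your speculation about ``multiple-copy'' or ``per-symbol'' access is unnecessary and off-track; the distinguisher has a single copy and a single query per interface, and the per-$k$ optimization over $w$ comes purely from the order of queries. This is exactly the paper's proof. Your steps (ii) and (iii) about likelihood-ratio optimality are fine once the adaptive ordering is in place, but without it the argument does not go through and your attempt to reconcile the formula via some enriched access model would not match the actual definition of the distinguishing advantage.
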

\begin{proof}
In order to distinguish between two c-n.s. systems, $P_{KZ|W}$ and
$Q_{KZ|W}$, the distinguisher has only one input to choose, $W$.
In addition, because the distinguisher has no choice for the input
of the classical part, the distinguishing advantage can only increase
if the distinguisher will first read the classical part of the system
and then choose $W$ according to the value of $K$. Therefore, for
two c-n.s. systems, the best strategy will be to read $K$ and then
to choose the best $W$, as indicated in the expression above.
\end{proof}
The distance (in norm 1) between two systems is defined to be half
of the distinguishing advantage between these systems.
\begin{defn}
(Distance between two c-n.s. systems). The distance between two c-n.s
systems $P_{KZ|W}$ and $Q_{KZ|W}$ in norm 1 is 
\[
\biggm|P_{KZ|W}-Q_{KZ|W}\biggm|_{1}\equiv\frac{1}{2}\underset{k}{\sum}\underset{w}{max}\underset{z}{\sum}\biggm|P_{KZ|W=w}(k,z)-Q_{KZ|W=w}(k,z)\biggm|.
\]

\end{defn}
In a cryptographic setting, we mostly consider the distance between
the real system in which the key is being calculated from the output
of the system held by the parties, and an ideal system. The ideal
system in our case is a system in which the key is uniformly distributed
and independent of the adversary's system. For a c-n.s. system $P_{KZ|W}$
where $K$ is over $\{0,1\}^{n}$, let $U_{n}$ denote the uniform
distribution over $\{0,1\}^{n}$ and let $P_{Z|W}$ be the marginal
system held by the adversary. The distance from uniform is a defined
as follows.
\begin{defn}
\label{Distance-from-uniform}(Distance from uniform). The distance
from uniform of the c-n.s. system $P_{KZ|W}$ is 
\[
d(K|Z(W))\equiv\biggm|P_{KZ|W}-U_{n}\times P_{Z|W}\biggm|_{1}
\]
where the system $U_{n}\times P_{Z|W}$ is defined such that $U_{n}\times P_{Z|W}(k,z|w)=U_{n}(k)\cdot P_{Z|W}(z|w)$.
\end{defn}
In the following sections we consider the distance from uniform given
a specific input (measurement) of the adversary, $W=w$. In this case,
according to Definition \ref{Distance-from-uniform}, we get 
\begin{eqnarray}
d(K|Z(w)) & = & \frac{1}{2}\underset{k,z}{\sum}\biggm|P_{KZ|W=w}(k,z)-U_{n}(k)\cdot P_{Z|W=w}(z)\biggm|=\nonumber \\
 & = & \frac{1}{2}\underset{k,z}{\sum}P_{Z|W=w}(z)\biggm|P_{K|Z=z}(k)-\frac{1}{n}\biggm|.\label{eq:dist}
\end{eqnarray}

\subsection{Modeling non-signalling adversaries}

When modeling a non-signalling adversary, the question in mind is:
given a system $P_{XY|UV}$ shared by Alice and Bob, for which some
arbitrary non-signalling conditions hold, which extensions to a system
$P_{XYZ|UVW}$, including the adversary Eve, are possible? The only
principle which limits Eve is the non-signalling principle, which
means that the conditional system $P_{XY|UV}^{Z=z}$ , for any $z\in Z$,
must fulfill all of the non-signalling conditions that $P_{XY|UV}$
fulfills, and in addition $P_{XYZ|UVW}$ does not allow signalling
between Alice and Bob together and Eve. Since any non-signalling assumptions
about the system of Alice and Bob are ensured physically (by shielding
the systems for example), they must still hold even if Eve's output
$z$ is given to some other party. Therefore the conditional system
$P_{XY|UV}^{Z=z}$ must also fulfill all the non-signalling conditions
of $P_{XY|UV}$, which justifies our assumptions about the power of
the adversary in this setting. 

\begin{figure}[H]
\begin{centering}
\begin{tikzpicture}[>=stealth, thick, font=\large]
	\draw[->] (-1.5,1.5) node[anchor=east] {U} --(0,1.5) ;
	\draw[<-] (-1.5,0.5) node[anchor=east] {X} --(0,0.5);
	\draw (0,0) rectangle (3,2); 
	\draw (1.5,1) node {$P_{XYZ|UVW}$};
	\draw[<-] (3,1.5)--(4.5,1.5) node[anchor=west] {V};
	\draw[->] (3,0.5)--(4.5,0.5) node[anchor=west] {Y};
	\draw[<-] (1,-1) node[anchor=north] {Z}--(1,0);
	\draw[->] (2,-1) node[anchor=north] {W} --(2,0);
\end{tikzpicture}
\par\end{centering}

\centering{}\caption{\label{fig:A-three-partite-system}A three-partite system}
\end{figure}
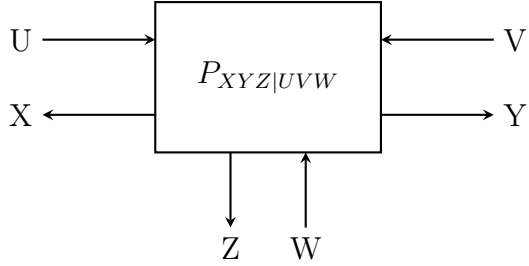

We adopt here the model given in \cite{hanggi2010impossibility,hanggi2010device,hanggi2009quantum}
of non-signalling adversaries. We reduce the scenario in which Alice,
Bob and Eve share a system $P_{XYZ|UVW}$ to the scenario considering
only Alice and Bob in the following way. Because Eve cannot signal
to Alice and Bob (even together) by her choice of input, we must have,
for all $x,y,u,v,w,w'$, 
\[
\sum_{z}P_{XYZ|UVW}(x,y,z|u,v,w)=\sum_{z}P_{XYZ|UVW}(x,y,z|u,v,w')=P_{XY|UV}(x,y|u,v).
\]
Moreover, as said before, since any non-signalling condition must
still hold even if Eve's output $z$ is given to some other party,
the system conditioned on Eve's outcome, $P_{XY|UV}^{Z=z}$, must
also fulfill all the non-signalling conditions of $P_{XY|UV}$. We
can therefore see Eve’s input as a choice of a convex decomposition
of Alice’s and Bob’s system and her output as indicating one part
of this decomposition. Formally,
\begin{defn}
(Partition of the system). A partition of a given multipartite system
$P_{XY|UV}$, which fulfills a certain set of non-signalling conditions
$\mathcal{C}$, is a family of pairs $(p^{z},P_{XY|UV}^{z})$, where:
\begin{enumerate}
\item $p^{z}$ is a classical distribution (i.e. for all $z$ $p^{z}\geq0$
and $\underset{z}{\sum}\, p^{z}=1$).
\item For all $z$, $P_{XY|UV}^{z}$ is a system that fulfills $\mathcal{C}$.
\item $P_{XY|UV}=\underset{z}{\sum}\, p^{z}\cdot P_{XY|UV\,.}^{z}$
\end{enumerate}
\end{defn}
We can use the same proof as in Lemma 2 and 3 in \cite{hanggi2009quantum}
to prove that this is indeed a legitimate model, i.e., that the set
of all partitions covers exactly all the possible strategies of a
non-signalling adversary in our case.

It is further proven in \cite{hanggi2010impossibility} that for showing
an impossibility result, we can assume that Eve's information $Z$
is a binary random variable: 
\begin{lem}
(Lemma 5 in \cite{hanggi2010impossibility}). If $(p^{z=0},P_{XY|UV}^{z=0})$
is an element of a partition with $m$ elements, then it is also possible
to define a new partition with only two elements, in which one of
the elements is $(p^{z=0},P_{XY|UV}^{z=0})$ .
\end{lem}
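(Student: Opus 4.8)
The plan is to produce the desired two-element partition by \emph{coarse-graining} Eve's outcome: keep the element $(p^{z=0},P_{XY|UV}^{z=0})$ as the ``$z'=0$'' part and collapse all the remaining outcomes $z\neq 0$ into a single ``$z'=1$'' part. Writing the given $m$-element partition as $P_{XY|UV}=\sum_{z=0}^{m-1}p^{z}\,P_{XY|UV}^{z}$, I set $p^{z'=0}:=p^{z=0}$, $P_{XY|UV}^{z'=0}:=P_{XY|UV}^{z=0}$, $p^{z'=1}:=\sum_{z\neq 0}p^{z}=1-p^{z=0}$, and, when $p^{z'=1}>0$,
\[
P_{XY|UV}^{z'=1}:=\frac{1}{1-p^{z=0}}\sum_{z\neq 0}p^{z}\,P_{XY|UV}^{z}\,.
\]
Then I would verify that $(p^{z'=0},P_{XY|UV}^{z'=0})$ and $(p^{z'=1},P_{XY|UV}^{z'=1})$ form a partition of $P_{XY|UV}$ satisfying the same non-signalling conditions $\mathcal{C}$.

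The first requirement in the definition of a partition, $p^{z'=0},p^{z'=1}\geq 0$ with $p^{z'=0}+p^{z'=1}=1$, is immediate. The decomposition requirement also follows directly, since the $1-p^{z=0}$ factors cancel:
\[
p^{z'=0}P_{XY|UV}^{z'=0}+p^{z'=1}P_{XY|UV}^{z'=1}=p^{z=0}P_{XY|UV}^{z=0}+\sum_{z\neq 0}p^{z}P_{XY|UV}^{z}=\sum_{z=0}^{m-1}p^{z}P_{XY|UV}^{z}=P_{XY|UV}\,.
\]
The substantive point is the remaining requirement: that $P_{XY|UV}^{z'=1}$ is a genuine system (non-negative, with properly normalized conditional distributions) that still lies in $\mathcal{C}$. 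Non-negativity and normalization are preserved because $P_{XY|UV}^{z'=1}$ is a convex combination of the systems $P_{XY|UV}^{z}$ ($z\neq 0$) with weights $p^{z}/(1-p^{z=0})$. The key observation, which I expect to be the only step that needs care, is that every non-signalling condition appearing in Definitions~\ref{n.s.-def}, \ref{Alice-&-Bob-n.s.}, \ref{alternative-seq-n.s.} and \ref{sequential-signaling} is a linear equality constraint on the entries of the system (each asserts that two linear functionals of $P$ agree). Hence the set of systems obeying a fixed collection $\mathcal{C}$ of such conditions is convex, so a convex combination of systems in $\mathcal{C}$ is again in $\mathcal{C}$; since each $P_{XY|UV}^{z}$ with $z\neq 0$ lies in $\mathcal{C}$ by hypothesis, so does $P_{XY|UV}^{z'=1}$.

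It remains to treat the degenerate case $p^{z=0}=1$, where the division defining $P_{XY|UV}^{z'=1}$ is meaningless. In that case $P_{XY|UV}=P_{XY|UV}^{z=0}$ already, so one takes the two-element partition $(1,P_{XY|UV}^{z=0})$, $(0,P_{XY|UV}^{z=0})$ (a weight-zero element may be taken to be any member of $\mathcal{C}$), and all three conditions hold trivially. Beyond this edge case and the convexity observation above, the argument is a routine verification, and I do not anticipate a genuine obstacle.
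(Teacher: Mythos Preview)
Your proposal is correct and matches the intended argument. The paper does not spell out its own proof of this lemma but simply refers to the original source and remarks that, because the non-signalling conditions are linear, the same proof carries over regardless of which conditions $\mathcal{C}$ are imposed; your coarse-graining construction together with the convexity/linearity observation is precisely that proof.
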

Moreover, it is not necessary to determine both parts of the partition
($(p^{z=0},P_{XY|UV}^{z=0})$ and\\
 $(p^{z=1},P_{XY|UV}^{z=1})$) explicitly. Instead, a condition on
the system given outcome $z=0$ is given, which will make sure that
there exists a second part, complementing it to a partition:
\begin{lem}
\label{lem:one-part-enough}(Lemma 6 in \cite{hanggi2010impossibility}).
Given a non-signalling distribution $P_{XY|UV}$, there exists a partition
with element $(p^{z=0},P_{XY|UV}^{z=0})$ if and only if for all inputs
and outputs $x,y,u,v$ it holds that $p^{z=0}\cdot P_{XY|UV}^{z=0}(x,y|u,v)\le P_{XY|UV}(x,y|u,v)$.
\end{lem}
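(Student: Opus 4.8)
The plan is to prove the equivalence in two directions, the forward (``only if'') direction being essentially immediate and the backward (``if'') direction being carried out by exhibiting an explicit two-element partition. Let $\mathcal{C}$ denote the set of non-signalling conditions fulfilled by $P_{XY|UV}$; as is implicit in the statement, the candidate element $P_{XY|UV}^{z=0}$ is itself a system fulfilling $\mathcal{C}$ and $p^{z=0}$ is a probability. For ``only if'': if $(p^{z=0},P_{XY|UV}^{z=0})$ is an element of a partition $\{(p^{z},P_{XY|UV}^{z})\}_{z}$, then $P_{XY|UV}(x,y|u,v)=\sum_{z}p^{z}P_{XY|UV}^{z}(x,y|u,v)$ is a sum of non-negative terms, and discarding all but the $z=0$ summand yields $p^{z=0}\cdot P_{XY|UV}^{z=0}(x,y|u,v)\le P_{XY|UV}(x,y|u,v)$ for every $x,y,u,v$.

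For the ``if'' direction, assume the dominance inequality and construct the complementary part. First I would sum the inequality over all $x,y$ at a fixed $(u,v)$; since $P_{XY|UV}^{z=0}$ and $P_{XY|UV}$ each normalise to $1$ this forces $p^{z=0}\le1$, and in the degenerate case $p^{z=0}=1$ it even forces $P_{XY|UV}^{z=0}=P_{XY|UV}$, so the one-element family $\{(1,P_{XY|UV})\}$ is already the desired partition. Assuming $p^{z=0}<1$, set $p^{z=1}:=1-p^{z=0}>0$ and
\[
P_{XY|UV}^{z=1}(x,y|u,v):=\frac{P_{XY|UV}(x,y|u,v)-p^{z=0}\cdot P_{XY|UV}^{z=0}(x,y|u,v)}{1-p^{z=0}}\,.
\]
Then I would check the three requirements in the definition of a partition: the weights $p^{z=0},p^{z=1}$ are non-negative and sum to $1$; the identity $p^{z=0}\cdot P_{XY|UV}^{z=0}+p^{z=1}\cdot P_{XY|UV}^{z=1}=P_{XY|UV}$ holds by construction; and $P_{XY|UV}^{z=1}$ is a valid conditional probability distribution, its non-negativity being exactly the dominance hypothesis and its normalisation following from those of $P_{XY|UV}$ and $P_{XY|UV}^{z=0}$.

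The step I expect to carry the real content is verifying that $P_{XY|UV}^{z=1}$ still fulfills $\mathcal{C}$. The key point is that every condition in $\mathcal{C}$ — whether a clause of Definition \ref{n.s.-def}, \ref{Alice-&-Bob-n.s.}, \ref{alternative-seq-n.s.} or \ref{sequential-signaling} — is a \emph{linear} (in fact homogeneous) equation $\ell(P)=0$ in the entries of the conditional distribution. Writing $P_{XY|UV}^{z=1}=\tfrac{1}{1-p^{z=0}}\bigl(P_{XY|UV}-p^{z=0}\cdot P_{XY|UV}^{z=0}\bigr)$ and using linearity together with $\ell(P_{XY|UV})=0$ and $\ell(P_{XY|UV}^{z=0})=0$ gives $\ell(P_{XY|UV}^{z=1})=0$, so $P_{XY|UV}^{z=1}$ indeed fulfills $\mathcal{C}$; this completes the construction and hence the proof. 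I do not anticipate any genuine obstacle beyond this bookkeeping — the whole argument rests on $\mathcal{C}$ being a system of linear equalities, which is the structural feature already emphasised earlier in the paper.
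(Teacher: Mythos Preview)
Your proposal is correct and matches the intended approach: the paper does not reprove this lemma but cites \cite{hanggi2010impossibility} and explicitly remarks that the argument carries over because the non-signalling conditions are linear --- precisely the structural feature you isolate and use when defining the complementary element $P_{XY|UV}^{z=1}$ and checking it still satisfies $\mathcal{C}$.
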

For the formal proofs of these lemmas, note that since the non-signalling
conditions are linear the same proofs as in Lemma 5 and Lemma 6 in
\cite{hanggi2010impossibility} will hold here as well, no matter
which non-signalling conditions are imposed for $P_{XY|UV}$.

Defining a partition is equivalent to choosing a measurement $W=w$,
therefore, we can also write the distance from uniform of a key, as
in Equation (\ref{eq:dist}), using the partition itself. Since we
will only need to consider the case where Alice and Bob try to output
one secret bit, we can further simplify the expression, as in the
following lemma.
\begin{lem}
(Lemma 5.1 in \cite{hanggi2010device}). For the case $K=f(X)$, where
$f:\{0,1\}^{|X|}\rightarrow\{0,1\}$, $U=u$, $V=v$, and where the
strategy $W=w$ is defined by the partition $\left\{ (p^{z_{w}},P_{XY|UV}^{z_{w}})\right\} _{z_{w}\in\{0,1\}}$,
\[
d\left(K|Z(w)\right)=\frac{1}{2}\sum_{z_{w}}p^{z_{w}}\cdot\biggm|\sum_{x,y}(-1)^{f(x)}P_{XY|UV}^{z_{w}}(x,y|u,v)\biggm|.
\]

\end{lem}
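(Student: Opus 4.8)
The plan is to start from the explicit form of the distance from uniform in Equation~(\ref{eq:dist}), specialised to a one-bit key, and to unfold the two conditional distributions appearing there in terms of the partition $\left\{(p^{z_w},P_{XY|UV}^{z_w})\right\}_{z_w\in\{0,1\}}$. When Eve plays the strategy $W=w$, her outcome $z_w$ occurs with probability $P_{Z|W=w}(z_w)=p^{z_w}$ --- this is the marginal over $x,y$ of the extension $P_{XYZ|UV,W=w}(x,y,z_w|u,v)=p^{z_w}P_{XY|UV}^{z_w}(x,y|u,v)$ --- and conditioned on $Z=z_w$ the Alice--Bob system is $P_{XY|UV}^{z_w}$. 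Since Alice outputs $K=f(X)$ with her input fixed to $u$ (and Bob's to $v$), the conditional law of the key is $P_{K|Z=z_w}(k)=\sum_{x:\,f(x)=k}\sum_{y}P_{XY|UV}^{z_w}(x,y|u,v)$. Substituting these into~(\ref{eq:dist}), and noting that for a single-bit $K$ the reference uniform distribution assigns mass $\tfrac12$ to each value, gives
\[
d\bigl(K|Z(w)\bigr)=\frac12\sum_{z_w}p^{z_w}\sum_{k\in\{0,1\}}\Bigl|P_{K|Z=z_w}(k)-\tfrac12\Bigr|.
\]

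The next step is the elementary one-bit identity. For any probability vector $(q(0),q(1))$ one has $q(0)-\tfrac12=\tfrac12-q(1)$, hence $|q(0)-\tfrac12|=|q(1)-\tfrac12|$, and therefore $\sum_{k\in\{0,1\}}|q(k)-\tfrac12|=2|q(0)-\tfrac12|=|q(0)-q(1)|=\bigl|\sum_{k\in\{0,1\}}(-1)^{k}q(k)\bigr|$. Applying this with $q=P_{K|Z=z_w}$ (the terms with $p^{z_w}=0$ contributing nothing) replaces the inner sum by $\bigl|\sum_{k}(-1)^{k}P_{K|Z=z_w}(k)\bigr|$.

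Finally I would collapse the sum over $k$ against the fibres of $f$: using $P_{K|Z=z_w}(k)=\sum_{x:\,f(x)=k}\sum_{y}P_{XY|UV}^{z_w}(x,y|u,v)$ one gets
\[
\sum_{k}(-1)^{k}P_{K|Z=z_w}(k)=\sum_{x,y}(-1)^{f(x)}P_{XY|UV}^{z_w}(x,y|u,v),
\]
and plugging this back yields exactly the asserted formula. None of the steps is genuinely difficult; the only point that needs care is the bookkeeping that identifies Eve's c-n.s.\ system $P_{KZ|W}$ with the data of the partition --- reading off $P_{Z|W=w}(z_w)=p^{z_w}$ and recognising $P_{K|Z=z_w}$ as the $f$-image of the $Y$-marginal of $P_{XY|UV}^{z_w}$ at the fixed inputs $(u,v)$ --- together with remembering that for a one-bit key the relevant uniform distribution is $(\tfrac12,\tfrac12)$, rather than the $1/n$ that appears in the general expression~(\ref{eq:dist}).
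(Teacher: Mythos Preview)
Your argument is correct and is exactly the standard computation one expects here: identify $P_{Z|W=w}(z_w)=p^{z_w}$ and $P_{K|Z=z_w}$ as the $f$-pushforward of the $Y$-marginal of $P_{XY|UV}^{z_w}$ at the fixed inputs, then use the one-bit identity $\sum_{k}|q(k)-\tfrac12|=|q(0)-q(1)|$ to collapse the sum. The paper itself does not give a proof but simply cites Lemma~5.1 in \cite{hanggi2010device}; your derivation is the natural way to unpack that reference and there is nothing to correct.
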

For a proof see Lemma 5.1 in \cite{hanggi2010device}.

\section{The Non-signalling Assumptions \label{sec:The-Underlying-System}}

\subsection{The basic assumptions}

It was proven in \cite{hanggi2009quantum} (Lemma 5) that any unbiased
PR-box with error $\varepsilon<0.25$ holds some secrecy. With the
goal of amplifying the privacy of the secret in mind, Alice and Bob
now share $n$ such systems. The underlying system of Alice and Bob
that we consider is a product of $n$ independent PR-boxes with errors
(Definition \ref{PR-box-error}), as seen from Alice's and Bob's point
of view. This is stated formally in the following definition:
\begin{defn}
\label{A-product-system}\label{basic-system}(Product system). A
product system of $n$ copies of PR-boxes with error $\varepsilon$
is the system $P_{XY|UV}=\underset{i\in[n]}{\prod}P_{X_{i}Y_{i}|U_{i}V_{i}}$,
where for each $i$, the system $P_{X_{i}Y_{i}|U_{i}V_{i}}$ is an
unbiased PR-box with error $\varepsilon$ as in Definition \ref{PR-box-error}.
\end{defn}
In addition, as explained in Section \ref{sub:Different-non-signaling},
in order for any system to be useful, we will always make sure that
Alice and Bob cannot signal to each other (otherwise any non-local
violation will not have any meaning - it could have also been achieved
by signalling between the systems). Mathematically, this means that
for any outcome $z$ of any adversary, Alice and Bob cannot signal
to each other using the system $P_{XY|UV}^{z}$. I.e., $P_{XY|UV}^{z}$
fulfills the conditions of Definition \ref{Alice-&-Bob-n.s.}. 

On top of this assumption we can now add more non-signalling assumptions
of different types. For example, in \cite{hanggi2009quantum}, \cite{masanes2009universally}
and \cite{masanes2011secure} it was proven that if we assume full
non-signalling conditions then privacy amplification is possible.
On the contrary, in \cite{hanggi2010impossibility} it was proven
that if we do not add more non-signalling assumption (and use only
the assumption that Alice and Bob cannot signal to each other) then
privacy amplification is impossible. An interesting question is therefore,
what happens in the middle? Is privacy amplification possible when
we use some additional assumptions but not all of them? 

The goal of this letter is to consider the conditions of almost backward
non-signalling systems, given in Definition \ref{sequential-signaling}.
We will do so by considering a larger set of equations, defined formally
in Section \ref{sub:Our-assumptions}.

\subsection{Our additional assumptions \label{sub:Our-assumptions}}

Consider the following system. 
\begin{defn}
\label{our-system} Alice and Bob and Eve share a system $P_{XYZ|UVW}$
such that:
\begin{enumerate}
\item The marginal system of Alice and Bob $P_{XY|UV}$ is a product system
as in Definition \ref{basic-system}.
\item For any $z$, $P_{XY|UV}^{z}$ fulfills the conditions of Definition
\ref{Alice-&-Bob-n.s.} (Alice and Bob cannot signal each other).
\item For all $i\in[n]$ and for any $z$ 
\begin{eqnarray*}
\forall x_{\overline{i}},y_{\overline{i}},u_{i},u'_{i},u_{\overline{i}},v\qquad\underset{x_{i},y_{i}}{\sum}P_{XY|UV}^{z}(x,y|u,v) & = & \underset{x_{i},y_{i}}{\sum}P_{XY|UV}^{z}(x,y|u',v)\\
\forall x_{\overline{i}},y_{\overline{i}},u,v_{i},v'_{i},v_{\overline{i}}\qquad\underset{x_{i},y_{i}}{\sum}P_{XY|UV}^{z}(x,y|u,v) & = & \underset{x_{i},y_{i}}{\sum}P_{XY|UV}^{z}(x,y|u,v').
\end{eqnarray*}

\end{enumerate}
\end{defn}
Note that the set of these conditions is equivalent to 
\begin{equation}
\forall x_{\overline{i}},y_{\overline{i}},u_{i},u'_{i},u_{\overline{i}},v_{i},v'_{i},v_{\overline{i}}\qquad\underset{x_{i},y_{i}}{\sum}P_{XY|UV}^{z}(x,y|u,v)=\underset{x_{i},y_{i}}{\sum}P_{XY|UV}^{z}(x,y|u',v').\label{eq:equiv}
\end{equation}
To see this first note that the conditions of Definition \ref{our-system}
are a special case of Equation (\ref{eq:equiv}). For the second direction:
$\forall x_{\overline{i}},y_{\overline{i}},u_{i},u'_{i},u_{\overline{i}},v_{i},v'_{i},v_{\overline{i}}$,
\[
\underset{x_{i},y_{i}}{\sum}P_{XY|UV}^{z}(x,y|u,v)=\underset{x_{i},y_{i}}{\sum}P_{XY|UV}^{z}(x,y|u',v)=\underset{x_{i},y_{i}}{\sum}P_{XY|UV}^{z}(x,y|u',v').
\]
Therefore, the equations of Definition \ref{our-system} mean that
for all $i$, parties $A_{i}$ and $B_{i}$ together cannot signal
the other parties (See Figure \ref{fig:our-conditions}). 

\begin{figure}
\begin{centering}
\begin{tikzpicture} [font=\large]
	\draw (0,0) node {$A_{n}$};
	\draw (0,0.8) node {.};
	\draw (0,1) node {.};
	\draw (0,1.2) node {.};
	\draw (0,2) node {$A_{3}$};
	\draw (0,3) node {$A_{2}$};
	\draw (0,4) node {$A_{1}$};

	\draw (3,0) node {$B_{n}$};
	\draw (3,0.8) node {.};
	\draw (3,1) node {.};
	\draw (3,1.2) node {.};
	\draw (3,2) node {$B_{3}$};
	\draw (3,3) node {$B_{2}$};
	\draw (3,4) node {$B_{1}$};

	\draw[color=red, dashed, thick] (-0.5,1.5) rectangle (3.5,2.5);
	\draw[->, color=red, thick] (1.5,2.25) -- (1.5,2.75);
	\draw[<-, color=red, thick] (1.5,1.25) -- (1.5,1.75);
\end{tikzpicture}

\par\end{centering}

\caption{\label{fig:our-conditions}The n.s. conditions of Definition \ref{our-system}
for $i=3$}
\end{figure}
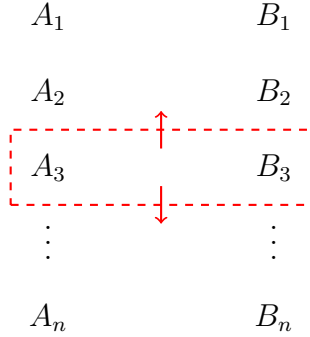

Adding these assumptions to the the non-signalling assumption between
Alice and Bob (Definition \ref{Alice-&-Bob-n.s.}) does not imply
the full non-signalling conditions. To see this consider the following
example. Alice and Bob share a system $P_{XY|UV}$ such that $X,Y,U,V\in\{0,1\}^{2}$.
We define the system such that each of the outputs is a perfectly
random bit and independent of any input, except for $X_{2}$, which
is equal to $Y_{1}\oplus U_{1}$. The outputs on Bob’s side look completely
random and independent of any input, i.e., the system is non-signalling
from Alice to Bob. Now note that whenever we do not have access to
$Y_{1}$, then $X_{2}$ also looks like a perfectly random bit and
independent of the input. Therefore, the system is also non-signalling
from Bob to Alice, and the conditions of Definition \ref{our-system}
hold as well. However, this system is not fully non-signalling, since
the input $U_{1}$ can be perfectly known from $X_{2}$ and $Y_{1}$
(i.e. $A_{1}$ can signal $A_{2}$ and $B_{1}$ together).

Adding this set of equations as assumptions means to add a lot more
assumptions about the system (on top of the basic system described
before). Intuitively, such a system is close to being a fully non-signalling
system. We will prove that even in this case, Theorem 15 in \cite{hanggi2010impossibility}
still holds and privacy amplification is impossible:
\begin{thm}
\label{thm:main}There exists a system as in Definition \ref{our-system}
such that for any hash function $f$, there exists a partition $w$
for which the distance from uniform of $f(X)$ given $w$ is at least
$c(\varepsilon)$, i.e., $d(f(X)|Z(w))\geq c(\varepsilon)$, where
$c(\varepsilon)$ is some constant which depends only on the error
of a single box, $\varepsilon$ (as in Definition \ref{A-product-system}).
\end{thm}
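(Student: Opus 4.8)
The plan is to follow and adapt the adversarial strategy of \cite{hanggi2010impossibility}. By the lemmas recalled in Section~\ref{sec:Preliminaries}, it suffices to exhibit, for a product system of $n$ PR-boxes with error $\varepsilon$ (Definition~\ref{basic-system}), a single binary outcome $z=0$ — i.e.\ a pair $(p^{z=0},P_{XY|UV}^{z=0})$ — such that (i) $p^{z=0}\cdot P_{XY|UV}^{z=0}(x,y|u,v)\le P_{XY|UV}(x,y|u,v)$ for all $x,y,u,v$ (so that by Lemma~\ref{lem:one-part-enough} it extends to a legitimate partition, hence to a legitimate non-signalling adversary), (ii) the conditional system $P_{XY|UV}^{z=0}$ satisfies all the non-signalling conditions of Definition~\ref{our-system} (Alice–Bob non-signalling plus the ``$A_i B_i$ cannot signal out'' conditions, equivalently Equation~(\ref{eq:equiv})), and (iii) for whatever hash function $f$ Alice and Bob apply, the measurement $w$ corresponding to this partition forces $d(f(X)|Z(w))\ge c(\varepsilon)$. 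Using the last lemma of Section~\ref{sec:Preliminaries}, condition (iii) amounts to lower-bounding $\tfrac12\, p^{z=0}\bigl|\sum_{x,y}(-1)^{f(x)}P_{XY|UV}^{z=0}(x,y|u,v)\bigr|$ for a suitable input pair $(u,v)$, i.e.\ showing that conditioned on $z=0$ the box is so strongly biased that no Boolean function of $X$ can be balanced.

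The concrete construction I would use is the one from \cite{hanggi2010impossibility}: on outcome $z=0$ replace the product of $\varepsilon$-noisy PR-boxes by (a mixture centered on) a product of \emph{deterministic} boxes — each single box becomes, say, the local deterministic box $X_i=a_i$, $Y_i=b_i$ for fixed bits with $a_i\oplus b_i = u_i\cdot v_i$ on the relevant inputs — chosen so that on some fixed input pair $(u,v)$ the pair $(X,Y)$ is a fixed (or almost fixed) string. Then $\sum_{x,y}(-1)^{f(x)}P^{z=0}(x,y|u,v) = (-1)^{f(x^\*)}$ has modulus $1$ regardless of $f$, so $d(f(X)|Z(w))\ge \tfrac12 p^{z=0}$, and one sets $c(\varepsilon)=\tfrac12 p^{z=0}$ with $p^{z=0}$ the largest weight allowed by the pointwise-domination condition (i). Since a product of $n$ single-box deterministic components that is consistent with the noisy box on each coordinate exists with weight $\Theta(\varepsilon^{?})$ per coordinate — one has to track exactly how the weight degrades with $n$, and this is precisely where \cite{hanggi2010impossibility} does the work showing it does \emph{not} collapse to something vanishing with $n$ — the resulting $c(\varepsilon)$ is a positive constant independent of $n$. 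I would import that quantitative bound verbatim, since our underlying Alice–Bob system and the domination condition (i) are literally identical to theirs.

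The genuinely new content, and the part I expect to be the main obstacle, is verifying condition (ii): that the conditional system $P_{XY|UV}^{z=0}$ from the \cite{hanggi2010impossibility} strategy can be taken to satisfy the \emph{extra} equations of Definition~\ref{our-system} — namely that for every $i$, summing out $(x_i,y_i)$ makes the marginal independent of $(u_i,v_i)$ — not merely the weaker Alice-to-Bob / Bob-to-Alice conditions that \cite{hanggi2010impossibility} needed. The plan here is: first observe, via Equation~(\ref{eq:equiv}), that these conditions are ``per-coordinate'' and that a \emph{product} conditional system $P_{XY|UV}^{z=0}=\prod_i Q^{(i)}_{X_iY_i|U_iV_i}$ automatically satisfies them iff each single-box factor $Q^{(i)}$ has the property that summing out $(x_i,y_i)$ kills the dependence on $(u_i,v_i)$ — which is trivially true since after summing out both outputs of a single box the sum is $1$ for every input. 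So a product choice for $z=0$ handles (ii) for free; the real tension is making that product choice simultaneously (a) Alice–Bob non-signalling on each factor (easy: take each $Q^{(i)}$ to be a convex combination of local boxes, which are Alice–Bob non-signalling, e.g.\ a deterministic box on the diagonal together with its reflection) and (b) pointwise-dominated by the noisy product with a constant weight, \emph{and} (c) biased enough on a fixed input pair that no $f$ can balance it. Reconciling (a)–(c) is the crux: a single deterministic box is not Alice–Bob non-signalling, so one must use a small symmetrization (e.g.\ $Q^{(i)} = \tfrac12$ of $(X_i,Y_i)=(0, u_iv_i)$-type box plus $\tfrac12$ of the bit-flipped partner) which keeps the diagonal relation $X_i\oplus Y_i = u_i v_i$ deterministic — so that on input $(u,v)=(1^n,1^n)$ one still has $X\oplus Y$ fixed, and one then lower-bounds $|\sum_{x,y}(-1)^{f(x)}P^{z=0}|$ by restricting to the bias of $X\oplus Y$ (a $\pm1$-valued quantity conditioned on $z=0$) rather than of $X$ alone, exactly as in the XOR-bias arguments of \cite{hanggi2010impossibility,MasanesXORimpossibility}. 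Once (a)–(c) are arranged for the single-box factor, taking the $n$-fold product inherits (ii) automatically by the observation above, inherits (a) factorwise, inherits (b) with a weight that is the product of the single-box weights but — and this is the point to check against \cite{hanggi2010impossibility} — still bounded below by a function of $\varepsilon$ alone after the correct normalization, and one reads off $c(\varepsilon)$.
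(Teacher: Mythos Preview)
Your proposal rests on a misidentification of the strategy in \cite{hanggi2010impossibility}. Their conditional box $P^{z=0}$ is \emph{not} a product of (symmetrized) deterministic single-box factors. It is the $f$-dependent row-wise reweighting $P^{z=0}(x,y|u,v)=c(x,y|u,v)\,P(x,y|u,v)$, where for each fixed $(y,u,v)$ the factor $c\in[0,2]$ is chosen to shift as much mass as possible from the cells with $f(x)=1$ into those with $f(x)=0$, subject to normalization and to $c\le 2$. This construction is genuinely non-product, depends on $f$, and has $p^{z=0}=\tfrac12$ independently of $n$; that is precisely what makes $c(\varepsilon)$ a constant in $n$.

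The product construction you sketch fails on two independent counts. First, weight: if $P^{z=0}=\prod_i Q^{(i)}$ with each $Q^{(i)}$ a perfect (or symmetrized-deterministic) PR-box, then on the CHSH-winning cells $Q^{(i)}/P_i=1/(1-\varepsilon)$, so the pointwise domination $p^{z=0}P^{z=0}\le P$ forces $p^{z=0}\le(1-\varepsilon)^n$. The best lower bound you could extract is of order $(1-\varepsilon)^n\to 0$; you cannot ``import the quantitative bound verbatim'' because \cite{hanggi2010impossibility} never faces this product-weight collapse---its avoidance of $n$-dependence comes exactly from \emph{not} using a product strategy. Second, bias: the symmetrization you need for Alice--Bob non-signalling on each factor makes each $X_i$ marginally a fair bit, so conditioned on $z=0$ the marginal of $X$ is still uniform and $\sum_{x}(-1)^{f(x)}P^{z=0}_X(x)=0$ for any balanced $f$. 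Knowing that $X\oplus Y$ is deterministic is of no help, since $f$ is a function of $X$ alone; the reduction to an ``$X\oplus Y$ bias'' you invoke has no content here.

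What the paper actually does for the new part (your condition (ii)) is a short symmetry argument applied to the $f$-dependent reweighting itself. For the noisy product box one checks that, when $v_i=1$, flipping $u_i$ is equivalent cell-by-cell to flipping $y_i$, i.e.\ $P(x,y|u^{i'},v)=P(x,y^{i'}|u,v)$, and similarly $P(x,y|u,v^{i'})=P(x,y^{i'}|u,v)$. Since Eve's factor $c(x,y|u,v)$ depends only on $f$ and on the content of the row $P(\cdot,y|u,v)$, these identities transfer to $c$ and hence to $P^{z=0}$. Summing over $y_i$ (and then over $x_i$) yields exactly Equation~(\ref{eq:equiv}). Thus the original \cite{hanggi2010impossibility} partition already satisfies the extra non-signalling conditions of Definition~\ref{our-system}, and the old distance bound applies unchanged to give the constant $c(\varepsilon)$.
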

Note that although our set of equations might seem unusual, proving
an impossibility result for this set implies the same impossibility
result for all sets of linear equations that are determined by it.
The set of equations of an almost backward non-signalling system,
as in Definition \ref{sequential-signaling}, is one interesting example
of such a set. 
\begin{lem}
\label{lem:our-imply-sequentialy-signaling}The almost backward non-signalling
conditions, as in Definition \ref{sequential-signaling}, are implied
by the non-signalling conditions of Definition \ref{our-system}.\end{lem}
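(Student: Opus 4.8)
The plan is to prove the slightly more general statement that \emph{any} conditional probability distribution $P_{XY|UV}$ (or $P_{XY|UV}^{z}$) satisfying Equation~(\ref{eq:equiv}) for every $i$ --- which, by the discussion just above, is exactly the content of the third condition of Definition~\ref{our-system} --- automatically satisfies every linear constraint appearing in Definition~\ref{sequential-signaling}. Phrased informally, I would argue that the ``block'' non-signalling constraint of Definition~\ref{sequential-signaling}, ``the suffix parties $\{A_j,B_j\}_{j\in I_2}$ cannot signal the prefix parties $\{A_j,B_j\}_{j\in I_1}$'', is simply the concatenation of the ``single-index'' constraints ``the pair $(A_j,B_j)$ cannot signal the remaining $2(n-1)$ parties'', ranged over $j\in I_2$, each of which is among the equations of Definition~\ref{our-system}.

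Concretely, I would fix $i$, set $I_1=\{1,\dots,i-1\}$ and $I_2=\{i,\dots,n\}$ as in Definition~\ref{sequential-signaling}, write $I_2=\{j_1<j_2<\dots<j_m\}$, and introduce hybrid input pairs $(\tilde u^{(\ell)},\tilde v^{(\ell)})$ for $0\le\ell\le m$, where $\tilde u^{(\ell)}$ carries the primed inputs $u'$ on $\{j_1,\dots,j_\ell\}$ and the unprimed inputs $u$ everywhere else, and similarly for $\tilde v^{(\ell)}$. Setting $H_\ell=\sum_{x_{I_2},y_{I_2}}P_{XY|UV}(x_{I_1},x_{I_2},y_{I_1},y_{I_2}\mid\tilde u^{(\ell)},\tilde v^{(\ell)})$, one has that $H_0$ is the left-hand side and $H_m$ the right-hand side of the equation in Definition~\ref{sequential-signaling}, so it suffices to establish $H_{\ell-1}=H_\ell$ for each $\ell\in\{1,\dots,m\}$.

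The only step with any content is this last equality, and it follows directly from Equation~(\ref{eq:equiv}) at index $j_\ell$. Since $\tilde u^{(\ell)}$ and $\tilde u^{(\ell-1)}$ agree on all coordinates except $j_\ell$ (and likewise $\tilde v^{(\ell)}$ and $\tilde v^{(\ell-1)}$), I would pull the summation $\sum_{x_{j_\ell},y_{j_\ell}}$ innermost and apply Equation~(\ref{eq:equiv}) for each fixed value of the remaining summation variables $x_{I_1},y_{I_1},x_{I_2\setminus\{j_\ell\}},y_{I_2\setminus\{j_\ell\}}$; this is legitimate because Equation~(\ref{eq:equiv}) is quantified over all values of the inputs at indices other than $j_\ell$, in particular over the common values that $\tilde u^{(\ell)}$ and $\tilde u^{(\ell-1)}$ share there. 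Summing the resulting identity over $x_{I_2\setminus\{j_\ell\}},y_{I_2\setminus\{j_\ell\}}$ by linearity of the sum yields $H_{\ell-1}=H_\ell$, and chaining the $m$ equalities gives $H_0=H_m$. Since $i$ was arbitrary, the conditions of Definition~\ref{sequential-signaling} hold.

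I do not anticipate any real obstacle: the argument is a routine ``switch one party's input at a time'' induction that uses only the single-index non-signalling statements Definition~\ref{our-system} already supplies (condition~2 of Definition~\ref{our-system}, non-signalling between Alice and Bob, is not even needed for this implication). The one point requiring care is the bookkeeping in the inductive step --- namely that the not-yet-switched coordinates of $I_2$ are allowed to sit inside the ``other parties'' block of Equation~(\ref{eq:equiv}) and be summed out afterward --- which is fine precisely because that equation ranges over all of their inputs and the outer summation is linear.
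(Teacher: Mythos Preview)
Your proposal is correct and is essentially the same argument as the paper's: both proceed by a hybrid/telescoping argument that switches the inputs $(u_j,v_j)\to(u'_j,v'_j)$ one index $j\in I_2$ at a time, invoking the single-index condition (Equation~(\ref{eq:equiv})) at each step after isolating the inner sum over $x_j,y_j$. Your write-up is slightly more formal in introducing the hybrids $H_\ell$, but the underlying idea and the level of detail match the paper's proof.
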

\begin{proof}
Consider the set of equations in Definition \ref{sequential-signaling}.
We will now prove them using the equations in Definition \ref{our-system},
this will imply that if the assumptions of Definition \ref{our-system}
hold then so do the assumption of almost backward non-signalling. 

For every $i\in[n]$ we can write 

\begin{multline*}
\underset{x_{I_{2}},y_{I_{2}}}{\sum}P_{XY|UV}(x,y|u_{I_{1}},u_{I_{2}},v_{I_{1}},v_{I_{2}})=\shoveright{\underset{\begin{array}{c}
x_{I_{2}/\{i\}}\\
y_{I_{2}/\{i\}}
\end{array}}{\sum}\underset{\begin{array}{c}
x_{i}\\
y_{i}
\end{array}}{\sum}P_{XY|UV}(x,y|u_{I_{1}},u_{i},u_{I_{2}/\{i\}},v_{I_{1}},v_{i},v_{I_{2}/\{i\}})}\\
=\shoveright{\underset{\begin{array}{c}
x_{I_{2}/\{i\}}\\
y_{I_{2}/\{i\}}
\end{array}}{\sum}\underset{\begin{array}{c}
x_{i}\\
y_{i}
\end{array}}{\sum}P_{XY|UV}(x,y|u_{I_{1}},u'_{i},u_{I_{2}/\{i\}},v_{I_{1}},v'_{i},v_{I_{2}/\{i\}})}\\
=\underset{\begin{array}{c}
x_{I_{2}/\{i+1\}}\\
y_{I_{2}/\{i+1\}}
\end{array}}{\sum}\underset{\begin{array}{c}
x_{i+1}\\
y_{i+1}
\end{array}}{\sum}P_{XY|UV}(x,y|u_{I_{1}},u'_{i},u_{i+1},u_{I_{2}/\{i,i+1\}},v_{I_{1}},v_{i}',v_{i+1},v_{I_{2}/\{i,i+1\}})
\end{multline*}

\begin{multline*}
=\shoveright{\underset{\begin{array}{c}
x_{I_{2}/\{i+1\}}\\
y_{I_{2}/\{i+1\}}
\end{array}}{\sum}\underset{\begin{array}{c}
x_{i+1}\\
y_{i+1}
\end{array}}{\sum}P_{XY|UV}(x,y|u_{I_{1}},u'_{\{i,i+1\}},u_{I_{2}/\{i,i+1\}},v_{I_{1}},v_{\{i,i+1\}}',v_{I_{2}/\{i,i+1\}})}\\
=...=\underset{x_{I_{2}},y_{I_{2}}}{\sum}P_{XY|UV}(x,y|u_{I_{1}},u'_{I_{2}},v_{I_{1}},v'_{I_{2}}).\tag*{\qedhere}
\end{multline*}

\end{proof}
Combining Lemma \ref{lem:our-imply-sequentialy-signaling} together
with Theorem \ref{thm:main} implies the following.
\begin{cor}
\label{cor:seq-cor}There exists an almost backward non-signalling
system as in Definition \ref{sequential-signaling} such that for
any hash function $f$, there exists a partition $w$ for which the
distance from uniform of $f(X)$ given $w$ is at least $c(\varepsilon)$,
i.e., $d(f(X)|Z(w))\geq c(\varepsilon)$, where $c(\varepsilon)$
is some constant which depends only on the error of a single box,
$\varepsilon$ (as in Definition \ref{A-product-system}).
\end{cor}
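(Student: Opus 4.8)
The plan is to obtain Corollary~\ref{cor:seq-cor} as an essentially immediate consequence of Theorem~\ref{thm:main} and Lemma~\ref{lem:our-imply-sequentialy-signaling}, with one short remark needed about which adversarial strategies remain admissible. First I would apply Theorem~\ref{thm:main} to fix a tripartite system $P_{XYZ|UVW}$ as in Definition~\ref{our-system}: its Alice--Bob marginal $P_{XY|UV}$ is a product of $n$ unbiased PR-boxes with error $\varepsilon$, and for every hash function $f$ there is a partition $w$ with $d(f(X)|Z(w))\ge c(\varepsilon)$. To read this as a statement about an almost backward non-signalling system I first observe that $P_{XY|UV}$ does satisfy Definition~\ref{sequential-signaling}: a product of PR-boxes is fully non-signalling (every marginal of a PR-box is the uniform distribution and the product factorises), and one can also see it directly from the fact that $P_{XY|UV}=\sum_z p^z P_{XY|UV}^z$ is a convex mixture of systems each obeying the linear conditions of Definition~\ref{our-system}, hence obeys them too, and these imply the almost backward conditions by Lemma~\ref{lem:our-imply-sequentialy-signaling}.

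The only point that is not purely formal is checking that the adversary supplied by Theorem~\ref{thm:main} is still a legal non-signalling adversary once the assumption on the system is weakened to ``almost backward non-signalling''. Recall that a non-signalling adversary against $P_{XY|UV}$ is a partition $\{(p^z,P_{XY|UV}^z)\}$ in which \emph{every} conditional system $P_{XY|UV}^z$ must fulfil the non-signalling conditions imposed on $P_{XY|UV}$ --- here the conditions of Definition~\ref{sequential-signaling} together with non-signalling between Alice and Bob. The partition produced by Theorem~\ref{thm:main} obeys the \emph{stronger} conditions of Definition~\ref{our-system} on each part, and Lemma~\ref{lem:our-imply-sequentialy-signaling} says these imply those of Definition~\ref{sequential-signaling}. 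Hence each part is in particular almost backward non-signalling, so the partition is admissible in the setting of the corollary and still achieves $d(f(X)|Z(w))\ge c(\varepsilon)$ for the associated $w$; the constant $c(\varepsilon)$ and the choice of $w$ for each $f$ are inherited verbatim from Theorem~\ref{thm:main}.

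I do not expect a genuine obstacle. The one thing to be careful about is the direction of the implication: imposing more non-signalling equations on the system \emph{shrinks} the set of partitions available to Eve, so an impossibility result proved under the richer set of conditions (Definition~\ref{our-system}) is the \emph{stronger} statement and automatically yields the one for the sparser, almost backward set (Definition~\ref{sequential-signaling}) --- which is exactly what Lemma~\ref{lem:our-imply-sequentialy-signaling} encodes. Beyond that, the only verification required is the trivial one that a product of error-$\varepsilon$ PR-boxes meets Definition~\ref{sequential-signaling}, noted above.
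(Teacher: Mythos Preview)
Your proposal is correct and matches the paper's approach exactly: the paper simply states that the corollary follows by combining Lemma~\ref{lem:our-imply-sequentialy-signaling} with Theorem~\ref{thm:main}, and you have spelled out precisely why this combination works (the partition elements satisfy the stronger conditions of Definition~\ref{our-system}, hence by the lemma also the weaker almost backward conditions, so the same adversarial strategy remains admissible).
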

Another interesting example is the set of equations which includes
non-signalling conditions between all of Alice's systems alone and
non-signalling conditions between all of Bob's systems alone, together
with the condition of non-signalling between Alice and Bob.
\begin{defn}
\label{completly-Alice-completly-Bob}An n-party conditional probability
distribution $P_{XY|UV}$ over $X,Y,U,V\in\{0,1\}^{n}$ is completely
non-signalling on Alice's side and completely non-signalling on Bob's
side, if for any $i\in[n]$, 
\begin{eqnarray*}
\forall x_{\overline{i}},u_{i},u'_{i},u_{\overline{i}}\quad\underset{x_{i}}{\sum}P_{X|U}(x_{i},x_{\overline{i}}|u_{i},u_{\overline{i}}) & = & \underset{x_{i}}{\sum}P_{X|U}(x_{i},x_{\overline{i}}|u'_{i},u_{\overline{i}})\\
\forall y_{\overline{i}},v_{i},v'_{i},v_{\overline{i}}\quad\underset{y_{i}}{\sum}P_{Y|V}(y_{i},y_{\overline{i}}|v_{i},v_{\overline{i}}) & = & \underset{y_{i}}{\sum}P_{Y|V}(y_{i},y_{\overline{i}}|v'_{i},v_{\overline{i}})
\end{eqnarray*}
where $P_{X|U}$ is the marginal system of $P_{XY|UV}$, held by Alice,
and $P_{Y|V}$ is the marginal system of $P_{XY|UV}$, held by Bob.\end{defn}
\begin{lem}
\label{lem:our-imply-completly-Alice}The non-signalling conditions
of Definition \ref{completly-Alice-completly-Bob} are implied by
the non-signalling conditions of Definition \ref{our-system}. \end{lem}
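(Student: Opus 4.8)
The plan is to obtain the marginal conditions of Definition \ref{completly-Alice-completly-Bob} from the stronger \emph{joint} conditions of Definition \ref{our-system} by one further marginalization. First I would observe that any system satisfying Definition \ref{our-system} is in particular non-signalling between Alice and Bob (item 2 of that definition, i.e.\ Definition \ref{Alice-&-Bob-n.s.}), so the marginal systems are well defined: $P_{X|U}(x|u)=\sum_{y}P_{XY|UV}(x,y|u,v)$ for every $v$, and symmetrically $P_{Y|V}(y|v)=\sum_{x}P_{XY|UV}(x,y|u,v)$ for every $u$. This is exactly the setting in which $P_{X|U}$ and $P_{Y|V}$ from the statement of Definition \ref{completly-Alice-completly-Bob} make sense.

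Next, fixing $i\in[n]$ and splitting the $Y$-sum as $\sum_{y}=\sum_{y_i}\sum_{y_{\overline i}}$, I would write
\[
\sum_{x_i}P_{X|U}(x_i,x_{\overline i}\mid u_i,u_{\overline i})
=\sum_{x_i}\sum_{y}P_{XY|UV}(x,y\mid u,v)
=\sum_{y_{\overline i}}\Bigl(\sum_{x_i,y_i}P_{XY|UV}(x,y\mid u,v)\Bigr),
\]
for an arbitrary fixed $v$. The inner bracket is precisely the quantity appearing in the first displayed equation of item 3 of Definition \ref{our-system}, which asserts that it is independent of $u_i$ (with all other coordinates held fixed). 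Hence the whole right-hand side, and therefore $\sum_{x_i}P_{X|U}(x_i,x_{\overline i}\mid u_i,u_{\overline i})$, is independent of $u_i$; this is exactly the Alice-side condition of Definition \ref{completly-Alice-completly-Bob}. The Bob-side condition follows by the symmetric computation: $\sum_{y_i}P_{Y|V}(y_i,y_{\overline i}\mid v)=\sum_{x_{\overline i}}\bigl(\sum_{x_i,y_i}P_{XY|UV}(x,y\mid u,v)\bigr)$, which by the second displayed equation of item 3 of Definition \ref{our-system} is independent of $v_i$.

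There is no real obstacle here: the argument is a linear manipulation of the defining equations, and the only point to be attentive about is the bookkeeping of quantifiers and which partial sums are taken. The conceptual content is simply that "$A_i$ and $B_i$ together cannot signal to the remaining parties" (Definition \ref{our-system}) implies, after summing out $B_i$ on Alice's marginal and $A_i$ on Bob's marginal, that "$A_i$ alone cannot signal to the other $A_j$'s" and "$B_i$ alone cannot signal to the other $B_j$'s", which is Definition \ref{completly-Alice-completly-Bob}.
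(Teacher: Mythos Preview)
Your proposal is correct and follows essentially the same route as the paper: rewrite the Alice-side marginal condition using the full system, split the sum over $y$ into $\sum_{y_{\overline i}}\sum_{y_i}$, and apply item~3 of Definition~\ref{our-system} to the inner sum $\sum_{x_i,y_i}P_{XY|UV}(x,y|u,v)$; then argue symmetrically for Bob. Your explicit remark that non-signalling between Alice and Bob (item~2) is what makes the marginals $P_{X|U}$ and $P_{Y|V}$ well defined is a nice addition that the paper leaves implicit.
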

\begin{proof}
We show that this is true for Alice's side. The proof for Bob's side
is analogous. First, for any $i\in[n]$, we can write the equation
\[
\forall x_{\overline{i}},u_{i},u'_{i},u_{\overline{i}}\qquad\underset{x_{i}}{\sum}P_{X|U}(x_{i},x_{\overline{i}}|u_{i},u_{\overline{i}})=\underset{x_{i}}{\sum}P_{X|U}(x_{i},x_{\overline{i}}|u'_{i},u_{\overline{i}})
\]
using the original system $P_{XY|UV}$ and the definition of a marginal
system:
\[
\forall x_{\overline{i}},u_{i},u'_{i},u_{\overline{i}},v\qquad\underset{\begin{array}{c}
x_{i},y\end{array}}{\sum}P_{XY|UV}(x,y|u_{i},u_{\overline{i}},v)=\underset{\begin{array}{c}
x_{i},y\end{array}}{\sum}P_{XY|UV}(x,y|u'_{i},u_{\overline{i}},v).
\]

Now, as in the proof of Lemma \ref{lem:our-imply-sequentialy-signaling},
\begin{align*}
\underset{\begin{array}{c}
x_{i},y\end{array}}{\sum}P_{XY|UV}(x,y|u_{i},u_{\overline{i}},v) & =\underset{y/\{y_{i}\}}{\sum}\underset{\begin{array}{c}
x_{i},y_{i}\end{array}}{\sum}P_{XY|UV}(x,y|u_{i},u_{\overline{i}},v)\\
 & =\underset{y/\{y_{i}\}}{\sum}\underset{\begin{array}{c}
x_{i},y_{i}\end{array}}{\sum}P_{XY|UV}(x,y|u'_{i},u_{\overline{i}},v)\\
 & =\underset{\begin{array}{c}
x_{i},y\end{array}}{\sum}P_{XY|UV}(x,y|u'_{i},u_{\overline{i}},v).\tag*{\qedhere}
\end{align*}

\end{proof}
Combining Lemma \ref{lem:our-imply-completly-Alice} together with
Theorem \ref{thm:main} implies the following.
\begin{cor}
There exists a system as in Definition \ref{completly-Alice-completly-Bob}
such that for any hash function $f$, there exists a partition $w$
for which the distance from uniform of $f(X)$ given $w$ is at least
$c(\varepsilon)$, i.e., $d(f(X)|Z(w))\geq c(\varepsilon)$, where
$c(\varepsilon)$ is some constant which depends only on the error
of a single box, $\varepsilon$ (as in Definition \ref{A-product-system}).
\end{cor}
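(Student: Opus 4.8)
The plan is to read this corollary off directly from Theorem~\ref{thm:main} and Lemma~\ref{lem:our-imply-completly-Alice}, in exactly the same way that Corollary~\ref{cor:seq-cor} is obtained from Theorem~\ref{thm:main} and Lemma~\ref{lem:our-imply-sequentialy-signaling}. No new system needs to be constructed and no new inequality needs to be computed; the entire content already sits in those two earlier results, so all that remains is to check that the system witnessing Theorem~\ref{thm:main} also witnesses the present statement.

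First I would invoke Theorem~\ref{thm:main} to fix a tripartite system $P_{XYZ|UVW}$ that satisfies the three conditions of Definition~\ref{our-system} and has the property that for every hash function $f$ there is a partition $w$ with $d(f(X)|Z(w))\ge c(\varepsilon)$. Then I would observe that condition~(2) of Definition~\ref{our-system} already gives non-signalling between Alice and Bob for each part $P_{XY|UV}^{z}$, while condition~(3) is precisely the hypothesis needed to apply Lemma~\ref{lem:our-imply-completly-Alice} to each $P_{XY|UV}^{z}$. Hence, by that lemma, every $P_{XY|UV}^{z}$ --- and in particular the marginal system $P_{XY|UV}$ --- is completely non-signalling on Alice's side and completely non-signalling on Bob's side in the sense of Definition~\ref{completly-Alice-completly-Bob}, on top of the non-signalling between Alice and Bob. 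Consequently the same system $P_{XYZ|UVW}$ is a legitimate system of the type considered in this corollary, and the partition supplied by Theorem~\ref{thm:main} is a legitimate adversarial partition for it: a partition whose parts obey the (stronger) conditions of Definition~\ref{our-system} automatically has parts obeying the (weaker) conditions of Definition~\ref{completly-Alice-completly-Bob}, so nothing is lost on Eve's side. Reading off $d(f(X)|Z(w))\ge c(\varepsilon)$ for an arbitrary $f$ then closes the argument.

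The only point that deserves a moment's care --- and it is the observation already recorded in the remark immediately following Theorem~\ref{thm:main} --- is the direction of the implication between the two constraint families: an impossibility result established under the richer family of non-signalling equations transfers to any family that is implied by it, because in passing to the poorer family both the system and the adversary of the rich setting remain admissible (the system still meets the fewer constraints, and the adversary's partition, having already met more constraints, meets the fewer ones as well). Thus the whole proof collapses to the inclusion of constraint families established in Lemma~\ref{lem:our-imply-completly-Alice}, and I do not expect any genuine obstacle.
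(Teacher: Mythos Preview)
Your proposal is correct and follows exactly the approach indicated in the paper, which simply states that the corollary follows by combining Lemma~\ref{lem:our-imply-completly-Alice} with Theorem~\ref{thm:main} (the analogue of how Corollary~\ref{cor:seq-cor} is obtained). Your additional remarks about the direction of the implication and why the adversary's partition remains admissible under the weaker constraint family are correct and merely make explicit what the paper leaves implicit.
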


\section{Privacy Amplification Against Non-signalling Adversaries \label{sec:Privacy-Amplification-Against}}

\subsection{\label{sub:The-impossibility-without-n.s.}The impossibility of privacy
amplification under the basic non-signalling assumptions }

We use here the same adversarial strategy as presented in \cite{hanggi2010impossibility}
and therefore repeat it here shortly for completeness. For additional
intuitive explanations and complete formal proofs please see \cite{hanggi2010impossibility}.

As explained before, Alice's and Bob's goal is to create a highly
secure key using a system, $P_{XY|UV}$, shared by both of them. Eve's
goal is to get some information about the key. It is therefore natural
to model this situation in the following way: Alice, Bob and Eve share
together a system $P_{XYZ|UVW}$, an extension of the system $P_{XY|UV}$
held by Alice and Bob, which fulfills some known non-signalling conditions.
Each party can perform measurements on its part of the system (i.e.,
insert input and read the outputs of their interfaces of the system),
communicate using a public authenticated channel, Alice then applies
some public hash function $f$ to the outcome she holds, $X$, and
in the end Alice should have a key $K=f(X)$, which is $\epsilon$-indistinguishable
from an ideal, uniformly distributed key, even conditioned on Eve's
information. I.e., $d(K|Z(W))\leq\epsilon$. 

The distance from uniform of the key $k$ is lower-bounded by the
distance from uniform of a single bit of the key, and therefore, for
an impossibility result, it is enough to assume that $f$ outputs
just one bit. Note that since the adversarial strategy can be chosen
after all public communication is over, it can also depend on a random
seed for the hash function. Therefore it is enough to consider deterministic
functions in this case. 

We consider a partition with only two outputs, $z=0$ and $z=1$,
each occurring with probability $\frac{1}{2}$, such that given $z=0$,
$f(X)$ is maximally biased towards 0. According to Lemma \ref{lem:one-part-enough}
it is enough to explicitly construct the conditional system given
measurement outcome $z=0$. In order to do so we start from the unbiased
system as seen by Alice and Bob and {}``shift around'' probabilities
such that $f(X)$ is maximally biased towards 0 and the marginal system
remains valid. By valid me mean that:
\begin{enumerate}
\item All entries must remain probabilities between 0 and 1.
\item The normalization of the probability distribution must remain.
\item The non-signalling condition between Alice and Bob must be satisfied.
\item There must exist a second measurement outcome $z=1$ occurring with
probability $\frac{1}{2}$, and such that the conditional system,
given outcome $z=1$, is also a valid probability distribution. This
second system must be able to compensate for the shifts in probabilities.
According to Lemma \ref{lem:one-part-enough} this means that the
entry in every cell must be smaller or equal twice the original entry.
\end{enumerate}
The system $P_{XY|UV}^{z=0}$ which describes this strategy is defined
formally in the following way. For simplicity we will drop the subscript
of $P_{XY|UV}(x,y|u,v)$ and write only $P(x,y|u,v)$. We use the
same notations as in \cite{hanggi2010impossibility,hanggi2010device}
and define the following groups: 
\begin{eqnarray*}
y_{<} & = & \left\{ y\biggm|\sum_{x|f(x)=0}P(x,y|u,v)<\sum_{x|f(x)=1}P(x,y|u,v)\right\} \\
y_{>} & = & \left\{ y\biggm|\sum_{x|f(x)=0}P(x,y|u,v)>\sum_{x|f(x)=1}P(x,y|u,v)\right\} \\
x_{0} & = & \left\{ x\biggm|f(x)=0\right\} \\
x_{1} & = & \left\{ x\biggm|f(x)=1\right\} 
\end{eqnarray*}
and a factor $c(x,y|u,v)$ as:
\begin{eqnarray*}
\forall x & \in & x_{0},y\in y_{<}\qquad c(x,y|u,v)=2\\
\forall x & \in & x_{1},y\in y_{<}\qquad c(x,y|u,v)=\frac{\underset{x'}{\sum}(-1)^{\left(f(x')+1\right)}P(x',y|u,v)}{\underset{x'|f(x')=1}{\sum}P(x',y|u,v)}\\
\forall x & \in & x_{0},y\in y_{>}\qquad c(x,y|u,v)=\frac{\underset{x'}{\sum}P(x',y|u,v)}{\underset{x'|f(x')=0}{\sum}P(x',y|u,v)}\\
\forall x & \in & x_{1},y\in y_{>}\qquad c(x,y|u,v)=0
\end{eqnarray*}

The system $P^{z=0}$ is then defined as $P^{z=0}(x,y|u,v)=c(x,y|u,v)\cdot P(x,y|u,v)$.

Intuitively, this definition of the strategy means that for each $u,v$
and within each row, Eve shifts as much probability as possible out
from the cells $P(x,y|u,v)$ for which $f(x)=1$ and into the cells
$P(x',y|u,v)$ for which $f(x')=0$ (she wants $P^{z=0}$ to be biased
towards 0). The factor $c(x,y|u,v)$ is defined in such a way that
as much probability as possible is being shifted, while still keeping
the system $P^{z=0}$ a valid element of a partition. 

Although Eve shifts probabilities for each $u,v$ separately, $P^{z=0}$
will still fulfill the required non-signalling conditions, which connect
the inputs $u,v$ to other inputs $u',v'$; this is due to the high
symmetry in the original marginal box of Alice and Bob (Definition
\ref{A-product-system}). For example, it is easy to see that since
Eve only shifts probabilities within the same row (i.e. cells with
the same value of $y$) Bob cannot signal to Alice using $P^{z=0}$;
the sum of the probabilities in one row stays the same as it was in
$P$, and since $P$ did not allow for signalling from Bob to Alice,
so do $P^{z=0}$. The other non-signalling conditions follow from
a bit more complex symmetries. 

It was proven in \cite{hanggi2010impossibility} that for this strategy%
\footnote{Actually, this strategy is being used only when Alice is using an
hash function which does not allow Bob to generate a bit from his
output of the system $Y$, which is highly correlated with the key.
If Alice uses a function which does allow Bob to get an highly correlated
key, then this function has to be biased and therefore Eve can just
use the trivial strategy of doing nothing. For more details please
see \cite{hanggi2010impossibility}.%
} $d(K|Z(w))\leq\frac{-1+\sqrt{1+64\varepsilon^{2}}}{32\varepsilon}$.

\subsection{Proof of the theorem - a more general impossibility result}

In order to prove Theorem \ref{thm:main} we will just prove that
the adversarial strategy presented in \cite{hanggi2010impossibility}
still works. Formally, this means that we need to prove that the element
$\left(p^{z=0}=\frac{1}{2},P^{z=0}(x,y|u,v)\right)$ in the partition
is still valid, even when we add the assumptions of Definition \ref{our-system},
and that $d(K|Z(w))$ is high. Since we do not change the strategy,
the same bound on $d(K|Z(w))$ still holds. Moreover, it was already
proven in \cite{hanggi2010impossibility} that $P^{z=0}(x,y|u,v)$
does not allow signalling between Alice and Bob, therefore we only
need to prove that our additional non-signalling assumptions of Definition
\ref{our-system} hold in the system $P^{z=0}(x,y|u,v)$, i.e., the
system satisfies our assumptions even conditioned on Eve's result. 

\begin{onehalfspace}
The first three lemmas deal with the impossibility of signalling from
Alice's side and the next three lemmas deal with Bob's side. All the
lemmas use the high symmetry of the marginal box (Definition \ref{A-product-system}).
What these lemmas show is that most of this symmetry still exists
in $P^{z=0}$, because we only shift probabilities within the same
row.
\end{onehalfspace}

We use the following notation; for all $i\in[n]$ let $u^{i'}$ be
$u^{i'}=u_{1}...u_{i-1},\overline{u_{i}},u_{i+1}...u_{n}$ (i.e.,
only the i'th bit is flipped) and the same for $x^{i'}$, $y^{i'}$
and $v^{i'}$.
\begin{lem}
\begin{onehalfspace}
\label{lem:Alice-equiv-cell}For all $i\in[n]$ and for all $x$,$y$,$u$,$v$
such that $v_{i}=1$, $P(x,y^{i'}|u,v)=P(x,y|u^{i'},v)$.\end{onehalfspace}
\end{lem}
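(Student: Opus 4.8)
The plan is to unwind the definitions on both sides of the claimed equality using the explicit structure of the product system of PR-boxes with error $\varepsilon$ from Definition \ref{A-product-system}, and to observe that flipping the $i$'th input bit $u_i$ and flipping the $i$'th output bit $y_i$ have exactly the same effect on the probability table because of how the PR-box correlation $x_i \oplus y_i = u_i \cdot v_i$ behaves when $v_i = 1$.

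First I would write $P(x,y|u,v) = \prod_{j \in [n]} P_{X_j Y_j | U_j V_j}(x_j, y_j | u_j, v_j)$, so that all factors with index $j \neq i$ are identical on the two sides of the asserted identity (flipping only the $i$'th coordinate of $y$ on the left, and only the $i$'th coordinate of $u$ on the right, leaves every $j \neq i$ factor untouched). Hence it suffices to prove the single-box identity $P_{X_i Y_i | U_i V_i}(x_i, y_i^{i'} \mid u_i, v_i) = P_{X_i Y_i | U_i V_i}(x_i, y_i \mid u_i^{i'}, v_i)$ whenever $v_i = 1$, where $y_i^{i'} = \overline{y_i}$ and $u_i^{i'} = \overline{u_i}$. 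For a single unbiased PR-box with error $\varepsilon$ (Definition \ref{PR-box-error}) the entry $P_{X_i Y_i | U_i V_i}(x_i, y_i | u_i, v_i)$ equals $\frac{1}{2}(1-\varepsilon)$ if $x_i \oplus y_i = u_i \cdot v_i$ and $\frac{\varepsilon}{2}$ otherwise. When $v_i = 1$ we have $u_i \cdot v_i = u_i$, and both operations — replacing $y_i$ by $\overline{y_i}$ while keeping $u_i$, or keeping $y_i$ while replacing $u_i$ by $\overline{u_i}$ — flip the parity check $x_i \oplus y_i = u_i$ into its negation, so they move between the same two values $\frac{1}{2}(1-\varepsilon)$ and $\frac{\varepsilon}{2}$ in exactly the same way. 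Therefore the two single-box entries coincide, and multiplying back the common $j \neq i$ factors gives the lemma.

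The only thing to be slightly careful about is the convention in the notation $x^{i'}, y^{i'}, u^{i'}, v^{i'}$, which flips only the $i$'th bit, and the hypothesis $v_i = 1$, which is exactly what makes $u_i \cdot v_i$ sensitive to a flip of $u_i$ (if $v_i = 0$ the statement would be false, since the right-hand side would not change at all while the left-hand side would). I expect no real obstacle here: this is a short symmetry computation on the $2\times 2\times 2\times 2$ PR-box table, and the main point is simply to articulate the case check on whether $x_i \oplus y_i$ equals $u_i$ or $\overline{u_i}$ before and after the flip, then invoke the product structure to localize everything to coordinate $i$.
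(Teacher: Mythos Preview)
Your proof is correct and follows essentially the same approach as the paper: both exploit the product structure to localize to the $i$'th PR-box and then use that, when $v_i=1$, flipping $u_i$ and flipping $y_i$ have the same effect on the parity $x_i\oplus y_i \stackrel{?}{=} u_i\cdot v_i$. Your argument is in fact slightly more direct than the paper's, which reaches the same conclusion via the intermediate identity $P(x,y|u^{i'},v)=P(x^{i'},y|u,v)$ combined with the diagonal symmetry $P(x,y|u,v)=P(x^{i'},y^{i'}|u,v)$.
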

\begin{proof}
\begin{onehalfspace}
For every single box , $P_{X_{i}Y_{i}|U_{i}V_{i}}(x_{i},y_{i}|u_{i},v_{i})=P_{X_{i}Y_{i}|U_{i}V_{i}}(\overline{x_{i}},\overline{y_{i}}|u_{i},v_{i})$.
Therefore it also holds that $P(x,y|u,v)=P(x^{i'},y^{i'}|u,v)$. Moreover,
\begin{eqnarray*}
P(x,y|u^{i'},v) & = & \left(\frac{1}{2}-\frac{\varepsilon}{2}\right)^{\underset{l}{\sum}1\oplus x_{l}\oplus y_{l}\oplus u_{l}^{i'}\cdot v_{l}}\cdot\left(\frac{\varepsilon}{2}\right)^{\underset{l}{\sum}x_{l}\oplus y_{l}\oplus u_{l}^{i'}\cdot v_{l}}=\\
 & = & \left(\frac{1}{2}-\frac{\varepsilon}{2}\right)^{\underset{l}{\sum}1\oplus x_{l}^{i'}\oplus y_{l}\oplus u_{l}\cdot v_{l}}\cdot\left(\frac{\varepsilon}{2}\right)^{\underset{l}{\sum}x_{l}^{i'}\oplus y_{l}\oplus u_{l}\cdot v_{l}}=\\
 & = & P(x^{i'},y|u,v)
\end{eqnarray*}
Combining these two properties together, we get $P(x,y|u^{i'},v)=P(x^{i'},y|u,v)=P(x,y^{i'}|u,v)$.\end{onehalfspace}
\end{proof}
\begin{lem}
\begin{onehalfspace}
\label{lem:Alice-equiv-type}For all $i\in[n]$ and for all $x$,$y$,$u$,$v$
such that $v_{i}=1$, $c(x,y^{i'}|u,v)=c(x,y|u^{i'},v)$. I.e., the
cells $P(x,y^{i'}|u,v)$ and $P(x,y|u^{i'},v)$ are from the same
type ($x_{0}/x_{1},\: y_{>}/y_{<}$).\end{onehalfspace}
\end{lem}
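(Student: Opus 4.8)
The plan is to reduce the claim to the previous lemma by showing that the operation $y \mapsto y^{i'}$ on rows (when $v_i = 1$) and $u \mapsto u^{i'}$ on inputs act identically on the data that defines the factor $c(x,y|u,v)$. Recall that $c(x,y|u,v)$ depends on the row $y$ (more precisely, on whether $y \in y_<$ or $y \in y_>$, and on $f(x)$) together with the row sums $\sum_{x'|f(x')=0} P(x',y|u,v)$ and $\sum_{x'|f(x')=1} P(x',y|u,v)$. So it suffices to show that for $v_i = 1$ these two row sums are the same for the pair $(y^{i'},u,v)$ and $(y,u^{i'},v)$, and that the classification of $x$ into $x_0$ or $x_1$ is unaffected (which is immediate, since $f(x)$ does not involve $y$ or $u$ at all).

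First I would invoke Lemma~\ref{lem:Alice-equiv-cell}, which already gives the cell-by-cell identity $P(x,y^{i'}|u,v) = P(x,y|u^{i'},v)$ for all $x$ when $v_i = 1$. Summing this equality over all $x$ with $f(x) = 0$ gives $\sum_{x|f(x)=0} P(x,y^{i'}|u,v) = \sum_{x|f(x)=0} P(x,y|u^{i'},v)$, and likewise over $x$ with $f(x) = 1$. Hence $y^{i'} \in y_<$ (computed with inputs $u,v$) if and only if $y \in y_<$ (computed with inputs $u^{i'},v$), and similarly for $y_>$; the degenerate "equal" case also matches. Plugging these equal row sums into the four-case definition of $c(x,y|u,v)$ — noting again that the $x_0/x_1$ membership of $x$ is the same on both sides — yields $c(x,y^{i'}|u,v) = c(x,y|u^{i'},v)$ directly, case by case. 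The final sentence of the statement, that the two cells are "of the same type," is just a restatement: the type of a cell is exactly the pair (which of $x_0,x_1$ contains $x$; which of $y_<,y_>$ contains the relevant row), and we have matched both coordinates.

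I do not anticipate a serious obstacle here: this lemma is purely bookkeeping built on top of Lemma~\ref{lem:Alice-equiv-cell}. The one point to be careful about is the boundary case where the row sums for $f(x)=0$ and $f(x)=1$ are exactly equal, so that $y$ lies in neither $y_<$ nor $y_>$; one should check the convention used for $c$ in that case (or note that the paper's strategy implicitly handles it, e.g. by placing such a row in $y_>$ or treating $c \equiv 1$ there) and confirm the matching still holds, which it does because the equal-row-sums property transfers verbatim. The only other thing worth stating explicitly is that $f$ acts on $x$ alone, so flipping a bit of $y$ or of $u$ never changes whether $x \in x_0$ or $x \in x_1$; this is what lets us match the "$x$-coordinate" of the type trivially.
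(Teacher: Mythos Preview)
Your proposal is correct and follows essentially the same approach as the paper's own proof: both observe that the $x_0/x_1$ classification depends only on $f(x)$ and hence is unchanged, and both invoke Lemma~\ref{lem:Alice-equiv-cell} cell-by-cell across all $x$ to conclude that the entire row $P(\bullet,y^{i'}|u,v)$ coincides with $P(\bullet,y|u^{i'},v)$, so the $y_</y_>$ classification and the row sums entering $c$ agree. Your version is slightly more explicit in spelling out the two row sums and in flagging the degenerate equal-sums case, but the argument is the same.
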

\begin{proof}
First, it is clear that if $P(x,y^{i'}|u,v)$ was a $x_{0}$ ($x_{1}$)
cell, so is $P(x,y|u^{i'},v)$ because this only depends on $x$. 

Now note that Lemma \ref{lem:Alice-equiv-cell} is correct for every
$x$, therefore the entire row $P(\bullet,y^{i'}|u,v)$ is equivalent
to the row $P(\bullet,y|u^{i'},v)$. This means that if we change
$y^{i'}$ to $y$ and $u$ to $u^{i'}$ together, we will get the
same row, and therefore if $P(x,y^{i'}|u,v)$ was a $y_{<}$ ($y_{>}$)
cell, so is $P(x,y|u^{i'},v)$. All together we get $c(x,y^{i'}|u,v)=c(x,y|u^{i'},v)$.
\end{proof}
The properties of the marginal system $P_{XY|UV}$ which are being
used in Lemma \ref{lem:Alice-equiv-cell} and Lemma \ref{lem:Alice-equiv-type}
can be easily seen, for example, in Table \ref{tab:11} and Table
\ref{tab:10}. For simplicity we consider a product of only 2 systems.
When changing Alice's input from $u=11$ to $u=10$ while $v=11$,
the rows interchange as Lemma \ref{lem:Alice-equiv-cell} suggests.

\begin{table}[b]
\centering  
\begin{tabular}{c c | c c c c} 
	& &\multicolumn{4}{c}{$u=11$} \\
	&\backslashbox{y}{x} & 00 & 01 & 10 & 11 \\
	\hline\\[-1.1em]
	\multirow{4}{*}{$v=11$} & 00 & 
								$(\frac{\varepsilon}{2})^2$ \cellcolor[gray]{0.9} &
								$\frac{\varepsilon}{2}\cdot\frac{1-\varepsilon}{2}$ \cellcolor[gray]{0.9} &
								$\frac{\varepsilon}{2}\cdot\frac{1-\varepsilon}{2}$ \cellcolor[gray]{0.9} &
								$(\frac{1-\varepsilon}{2})^2$ \cellcolor[gray]{0.9} \\[1ex] 
							& 01 &
								$\frac{\varepsilon}{2}\cdot\frac{1-\varepsilon}{2}$ & 
								$(\frac{\varepsilon}{2})^2$ & 
								$(\frac{1-\varepsilon}{2})^2$ & 
								$\frac{\varepsilon}{2}\cdot\frac{1-\varepsilon}{2}$ \\[1ex] 
							& 10 & 
								$\frac{\varepsilon}{2}\cdot\frac{1-\varepsilon}{2}$ \cellcolor[gray]{0.9} &
								$(\frac{1-\varepsilon}{2})^2$ \cellcolor[gray]{0.9} &
								$(\frac{\varepsilon}{2})^2$ \cellcolor[gray]{0.9} &
								$\frac{\varepsilon}{2}\cdot\frac{1-\varepsilon}{2}$ \cellcolor[gray]{0.9} \\[1ex]
							& 11 &
								$(\frac{1-\varepsilon}{2})^2$ &
								$\frac{\varepsilon}{2}\cdot\frac{1-\varepsilon}{2}$ &
								$\frac{\varepsilon}{2}\cdot\frac{1-\varepsilon}{2}$ &
								$(\frac{\varepsilon}{2})^2$ \\[1ex]
	\hline 
\end{tabular}

\caption{\label{tab:11}$P_{XY|UV}$ for two systems ($n=2$), for $u=11$,
$v=11$}
\end{table}

\begin{table}
\centering  
\begin{tabular}{c c | c c c c} 
	& &\multicolumn{4}{c}{$u=10$} \\
	&\backslashbox{y}{x} & 00 & 01 & 10 & 11 \\
	\hline\\[-1.0em]
	\multirow{4}{*}{$v=11$} & 00 & 
								$\frac{\varepsilon}{2}\cdot\frac{1-\varepsilon}{2}$ & 
								$(\frac{\varepsilon}{2})^2$ & 
								$(\frac{1-\varepsilon}{2})^2$ & 
								$\frac{\varepsilon}{2}\cdot\frac{1-\varepsilon}{2}$ \\[1ex]
							& 01 &
								$(\frac{\varepsilon}{2})^2$ \cellcolor[gray]{0.9} &
								$\frac{\varepsilon}{2}\cdot\frac{1-\varepsilon}{2}$ \cellcolor[gray]{0.9} &
								$\frac{\varepsilon}{2}\cdot\frac{1-\varepsilon}{2}$ \cellcolor[gray]{0.9} &
								$(\frac{1-\varepsilon}{2})^2$ \cellcolor[gray]{0.9} \\[1ex] 
							& 10 & 
								$(\frac{1-\varepsilon}{2})^2$ &
								$\frac{\varepsilon}{2}\cdot\frac{1-\varepsilon}{2}$ &
								$\frac{\varepsilon}{2}\cdot\frac{1-\varepsilon}{2}$ &
								$(\frac{\varepsilon}{2})^2$ \\[1ex]
							& 11 &
								$\frac{\varepsilon}{2}\cdot\frac{1-\varepsilon}{2}$ \cellcolor[gray]{0.9} &
								$(\frac{1-\varepsilon}{2})^2$ \cellcolor[gray]{0.9} &
								$(\frac{\varepsilon}{2})^2$ \cellcolor[gray]{0.9} &
								$\frac{\varepsilon}{2}\cdot\frac{1-\varepsilon}{2}$ \cellcolor[gray]{0.9} \\[1ex]
	\hline 
\end{tabular}

\caption{\label{tab:10}$P_{XY|UV}$ for two systems ($n=2$), for $u=10$,
$v=11$}
\end{table}

\begin{lem}
\begin{onehalfspace}
\label{lem:proof-Alice-side}In the conditional system \textup{$P^{z=0}$}
, for any $i\in[n]$ 
\[
\forall x_{\overline{i}},y_{\overline{i}},u_{i},u_{\overline{i}},v\qquad\underset{x_{i},y_{i}}{\sum}P^{z=0}(x,y|u,v)=\underset{x_{i},y_{i}}{\sum}P^{z=0}(x,y|u^{i'},v).
\]
\end{onehalfspace}
\end{lem}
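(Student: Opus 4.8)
The plan is to establish the identity separately for the two possible values of Bob's $i$-th input bit $v_i$, since a flip of $u_i$ affects the marginal product box $P_{XY|UV}$ quite differently in the two cases. Throughout I will use that $P^{z=0}(x,y|u,v)=c(x,y|u,v)\,P(x,y|u,v)$, and that the factor $c(x,y|u,v)$ depends on the inputs $(u,v)$ only through the single row $P(\,\cdot\,,y|u,v)$: the partition $x_0,x_1$ of the $x$-values depends only on $f$, the sets $y_<,y_>$ are determined by comparing two sums over that row, and each of the four case expressions for $c$ is either a constant or a ratio of sums over that same row.

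In the case $v_i=0$ we have $u_i\cdot v_i=0$ irrespective of $u_i$, so the $i$-th box factor $P_{X_iY_i|U_iV_i}(x_i,y_i|u_i,0)$ (Definition~\ref{PR-box-error}) does not depend on $u_i$; hence $P(x,y|u,v)=P(x,y|u^{i'},v)$ for every $x,y$. In particular every row is unchanged, so $y_<,y_>$ (and trivially $x_0,x_1$) coincide for the inputs $(u,v)$ and $(u^{i'},v)$, and therefore $c(x,y|u,v)=c(x,y|u^{i'},v)$ for all $x,y$. Multiplying, $P^{z=0}(x,y|u,v)=P^{z=0}(x,y|u^{i'},v)$ cell by cell, which is far stronger than the claimed equality of the marginals over $(x_i,y_i)$.

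In the case $v_i=1$ I would invoke Lemma~\ref{lem:Alice-equiv-cell} and Lemma~\ref{lem:Alice-equiv-type}, which give, for every $x,y$, that $P(x,y^{i'}|u,v)=P(x,y|u^{i'},v)$ and $c(x,y^{i'}|u,v)=c(x,y|u^{i'},v)$; multiplying, $P^{z=0}(x,y^{i'}|u,v)=P^{z=0}(x,y|u^{i'},v)$ for all $x,y$. Fixing $x_{\overline{i}},y_{\overline{i}}$ and summing this identity over $x_i,y_i\in\{0,1\}$, the right-hand side is already $\sum_{x_i,y_i}P^{z=0}(x,y|u^{i'},v)$, while on the left, since $y_i\mapsto\overline{y_i}$ is a bijection of $\{0,1\}$, re-indexing the $y_i$-sum along it turns $\sum_{x_i,y_i}P^{z=0}(x,y^{i'}|u,v)$ into $\sum_{x_i,y_i}P^{z=0}(x,y|u,v)$. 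Equating the two sides yields the lemma.

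I do not expect a real obstacle here: the genuine symmetry content has already been packaged into Lemmas~\ref{lem:Alice-equiv-cell} and~\ref{lem:Alice-equiv-type}, and what remains is the trivial $v_i=0$ observation together with the bookkeeping of summing a cell-wise identity over $(x_i,y_i)$. The only point needing a little care is that in the $v_i=1$ case the cell-wise identity must be applied for all $x_i,y_i$ before summing, and that flipping the $i$-th coordinate of $y$ commutes with the sum over $y_i$ — both immediate.
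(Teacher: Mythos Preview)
Your proof is correct and follows essentially the same route as the paper: split on $v_i$, handle $v_i=0$ by the trivial observation that the whole distribution is unchanged under flipping $u_i$, and for $v_i=1$ combine Lemmas~\ref{lem:Alice-equiv-cell} and~\ref{lem:Alice-equiv-type} to get a cell-wise identity $P^{z=0}(x,y^{i'}|u,v)=P^{z=0}(x,y|u^{i'},v)$, then re-index the $y_i$-sum. The only cosmetic difference is that the paper in fact proves the slightly stronger statement $\sum_{y_i}P^{z=0}(x,y|u,v)=\sum_{y_i}P^{z=0}(x,y|u^{i'},v)$ for every fixed $x$ (summing only over $y_i$), and then notes that this implies the version summed over both $x_i$ and $y_i$.
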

\begin{proof}
\begin{onehalfspace}
First note that for any $u$ and $v$ such that $v_{i}=0$ the probability
distribution $P_{XY|U=u,V=v}$ is identical to $P_{XY|U=u^{i'},V=v}$
(because of the properties of a single box, see Figure \ref{fig:PR-box-error}).
Therefore Eve will shift the probabilities in these two systems in
the same way, which implies that $P_{XY|U=u,V=v}^{z=0}$ is identical
to $P_{XY|U=u^{i'},V=v}^{z=0}$, and in particular, any non-signalling
conditions will hold in this case. 

Assume $v_{i}=1$. We will prove something a bit stronger than needed.
We prove that for all $x,y_{\overline{i}},u_{i},u_{\overline{i}},v$,
$\underset{y_{i}}{\sum}P^{z=0}(x,y|u,v)=\underset{y_{i}}{\sum}P^{z=0}(x,y|u^{i'},v)$.
This in particular implies that $\underset{x_{i},y_{i}}{\sum}P^{z=0}(x,y|u,v)=\underset{x_{i},y_{i}}{\sum}P^{z=0}(x,y|u^{i'},v)$
also holds. 
\begin{align*}
\underset{y_{i}}{\sum}P^{z=0}(x,y|u^{i'},v) & =\underset{y_{i}}{\sum}c(x,y|u^{i'},v)\cdot P(x,y|u^{i'},v)\\
 & =\underset{y_{i}}{\sum}c(x,y^{i'}|u,v)\cdot P(x,y^{i'}|u,v)\\
 & =\underset{y_{i}}{\sum}P^{z=0}(x,y^{i'}|u,v)\\
 & =\underset{y_{i}}{\sum}P^{z=0}(x,y|u,v).
\end{align*}
The first and third equalities are by the definition of $P^{z=0}$,
the second equality is due to Lemma \ref{lem:Alice-equiv-cell} and
Lemma \ref{lem:Alice-equiv-type} and the last equality is due the
fact that the sum is over $y_{i}$. \end{onehalfspace}
\end{proof}
\begin{lem}
\begin{onehalfspace}
\label{lem:Bob-equiv-cell}For all $i\in[n]$ and for all $x$,$y$,$u$,$v$,
$P(x,y^{i'}|u,v)=P(x,y|u,v^{i'})$.\end{onehalfspace}
\end{lem}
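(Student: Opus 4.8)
The plan is to prove this the same way Lemma~\ref{lem:Alice-equiv-cell} was proved, with the roles of Alice's input $u$ and Bob's input $v$ interchanged; I would carry it out for the relevant case $u_i=1$ (the counterpart of the hypothesis $v_i=1$ there — if $u_i=0$ the $i$-th box, hence the whole product, does not depend on $v_i$ at all, so that branch is trivial for the later lemmas that invoke this one). Because the marginal $P_{XY|UV}$ is a product of error-$\varepsilon$ PR-boxes (Definition~\ref{A-product-system}), everything reduces coordinate by coordinate to two elementary symmetries of a single error-$\varepsilon$ PR-box.

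First, $P_{X_iY_i|U_iV_i}(x_i,y_i|u_i,v_i)$ depends on the outputs only through $x_i\oplus y_i$, so $P_{X_iY_i|U_iV_i}(x_i,y_i|u_i,v_i)=P_{X_iY_i|U_iV_i}(\overline{x_i},\overline{y_i}|u_i,v_i)$; multiplying over all $n$ coordinates gives the ``double flip'' identity $P(x,y|u,v)=P(x^{i'},y^{i'}|u,v)$ (the same fact used at the start of the proof of Lemma~\ref{lem:Alice-equiv-cell}). Second, I would show $P(x,y|u,v^{i'})=P(x^{i'},y|u,v)$ when $u_i=1$ by the same exponent bookkeeping: writing $P(x,y|u,v)=\bigl(\tfrac12-\tfrac\varepsilon2\bigr)^{\sum_l 1\oplus x_l\oplus y_l\oplus u_lv_l}\bigl(\tfrac\varepsilon2\bigr)^{\sum_l x_l\oplus y_l\oplus u_lv_l}$, the summands with $l\neq i$ are untouched, while for $l=i$ one has $u_iv_i^{i'}=u_i\overline{v_i}=\overline{v_i}=1\oplus u_iv_i$ (using $u_i=1$), so $1\oplus x_i\oplus y_i\oplus u_iv_i^{i'}=x_i\oplus y_i\oplus u_iv_i=1\oplus\overline{x_i}\oplus y_i\oplus u_iv_i$; thus both exponent vectors coincide with those of $P(x^{i'},y|u,v)$.

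Composing the two identities then gives the claim: $P(x,y|u,v^{i'})=P(x^{i'},y|u,v)=P(x,y^{i'}|u,v)$, where the last step is the double-flip symmetry applied to the pair $(x^{i'},y)$ — flipping coordinate $i$ of both turns $x^{i'}$ back into $x$ and $y$ into $y^{i'}$. I expect no real obstacle here: this is just the mirror of Lemma~\ref{lem:Alice-equiv-cell}, and the only things to be careful with are the $l=i$ binary algebra and (for the downstream application) keeping the $u_i=0$ and $u_i=1$ cases apart. Once this lemma is in hand, the Bob-side analogues of Lemma~\ref{lem:Alice-equiv-type} and Lemma~\ref{lem:proof-Alice-side} follow by copying those proofs verbatim with $u\leftrightarrow v$, which is exactly what is needed to conclude that $P^{z=0}$ still satisfies the second family of equations in Definition~\ref{our-system}.
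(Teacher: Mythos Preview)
Your argument is correct and rests on the same exponent-bookkeeping idea as the paper. The paper's proof is slightly more direct: it compares the exponents of $P(x,y|u,v^{i'})$ and $P(x,y^{i'}|u,v)$ in a single step (observing that $x_i\oplus y_i\oplus u_i\overline{v_i}=x_i\oplus\overline{y_i}\oplus u_iv_i$), without passing through the intermediate $P(x^{i'},y|u,v)$ and the double-flip symmetry. More notably, you are right to restrict to $u_i=1$: the lemma as stated in the paper, for \emph{all} $x,y,u,v$, is in fact false when $u_i=0$ (e.g.\ with $n=1$ and $u=v=x=y=0$ one gets $P(0,1|0,0)=\varepsilon/2\neq(1-\varepsilon)/2=P(0,0|0,1)$), and the paper's own exponent identity only holds when $u_i=1$. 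This causes no harm downstream, since Lemma~\ref{lem:proof-Bob-side} disposes of the case $u_i=0$ separately and invokes the present lemma only with $u_i=1$, exactly as you anticipated.
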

\begin{proof}
\begin{onehalfspace}
\begin{align*}
P(x,y|u,v^{i'}) & =\left(\frac{1}{2}-\frac{\varepsilon}{2}\right)^{\underset{l}{\sum}1\oplus x_{l}\oplus y_{l}\oplus u_{l}\cdot v_{l}^{i'}}\cdot\left(\frac{\varepsilon}{2}\right)^{\underset{l}{\sum}x_{l}\oplus y_{l}\oplus u_{l}\cdot v_{l}^{i'}}=\\
 & =\left(\frac{1}{2}-\frac{\varepsilon}{2}\right)^{\underset{l}{\sum}1\oplus x_{l}\oplus y_{l}^{i'}\oplus u_{l}\cdot v_{l}}\cdot\left(\frac{\varepsilon}{2}\right)^{\underset{l}{\sum}x_{l}\oplus y_{l}^{i'}\oplus u_{l}\cdot v_{l}}=\\
 & =P(x,y^{i'}|u,v).\tag*{\qedhere}
\end{align*}
\end{onehalfspace}
\end{proof}
\begin{lem}
\begin{onehalfspace}
\label{lem:Bob-equiv-type}For all $i\in[n]$ and for all $x$,$y$,$u$,$v$
such that $v_{i}=1$, $c(x,y^{i'}|u,v)=c(x,y|u,v^{i'})$. I.e., the
cells $P(x,y^{i'}|u,v)$ and $P(x,y|u,v^{i'})$ are from the same
type ($x_{0}/x_{1},\: y_{>}/y_{<}$).\end{onehalfspace}
\end{lem}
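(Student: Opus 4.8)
The plan is to follow the proof of Lemma~\ref{lem:Alice-equiv-type} essentially verbatim, with the only change being that the appeal to Lemma~\ref{lem:Alice-equiv-cell} is replaced by an appeal to Lemma~\ref{lem:Bob-equiv-cell}. Recall that the factor $c(x,y|u,v)$ is determined by two ingredients: the ``$x$-type'' of the cell, i.e.\ whether $x\in x_0$ or $x\in x_1$, which is a function of $x$ alone through $f$; and the ``$y$-type'' of the cell ($y\in y_<$ versus $y\in y_>$) together with a numerical ratio, all of which are functions only of the row $P(\bullet,y|u,v)$, since $y_<$, $y_>$ and the ratios in the definition of $c$ are expressed solely in terms of $\sum_{x'|f(x')=0}P(x',y|u,v)$ and $\sum_{x'|f(x')=1}P(x',y|u,v)$.

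First I would dispose of the $x$-type: the first argument is $x$ on both sides of the claimed equality, so $x\in x_0$ (resp.\ $x\in x_1$) holds for $c(x,y^{i'}|u,v)$ exactly when it holds for $c(x,y|u,v^{i'})$; no symmetry of the box is needed for this.

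The key step is then to invoke Lemma~\ref{lem:Bob-equiv-cell}, which gives $P(x',y^{i'}|u,v)=P(x',y|u,v^{i'})$ for \emph{every} $x'$ (and, in fact, for all $v$, irrespective of $v_i$). Hence the entire row $P(\bullet,y^{i'}|u,v)$ coincides with the entire row $P(\bullet,y|u,v^{i'})$. Therefore the two partial sums over $x'$ with $f(x')=0$ and with $f(x')=1$ agree between these rows, which shows at once that $y^{i'}$ lies in $y_<$ (resp.\ $y_>$) with respect to inputs $(u,v)$ precisely when $y$ lies in $y_<$ (resp.\ $y_>$) with respect to inputs $(u,v^{i'})$, and that the ratios occurring in the $x_1,y_<$ and $x_0,y_>$ branches of the definition of $c$ evaluate to the same number. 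Combining the $x$-type and $y$-type matchings yields $c(x,y^{i'}|u,v)=c(x,y|u,v^{i'})$ in each of the four cases, which is the assertion.

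I do not anticipate any genuine difficulty. The one mild point is that the ratios in the definition of $c$ must be well defined, but each is used only on rows where the relevant denominator is strictly positive (for $y\in y_<$ one has $\sum_{x'|f(x')=1}P(x',y|u,v)>0$, and dually for $y\in y_>$), and this nonvanishing is again a property of the row, so it transfers along the row identity together with everything else. It is worth noting, in contrast with the Alice-side argument, that no case distinction on $v_i$ is actually needed here: the Alice version had to go through the box identity $P(x,y|u^{i'},v)=P(x^{i'},y|u,v)$, which holds only when $v_i=1$, in order to trade an $x$-flip for a $y$-flip, whereas Lemma~\ref{lem:Bob-equiv-cell} directly equates a $y$-flip with a $v$-flip; the hypothesis $v_i=1$ in the statement is thus harmless and is presumably retained only for uniformity with the neighbouring lemmas.
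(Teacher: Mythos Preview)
Your proposal is correct and follows essentially the same approach as the paper's proof: both dispose of the $x$-type as depending only on $x$, then invoke Lemma~\ref{lem:Bob-equiv-cell} (applied for every $x'$) to identify the entire rows $P(\bullet,y^{i'}|u,v)$ and $P(\bullet,y|u,v^{i'})$, from which the $y$-type and the defining ratios of $c$ coincide. Your closing remark that the hypothesis $v_i=1$ plays no role here is consistent with the paper, whose proof likewise never uses it.
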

\begin{proof}
As in Lemma \ref{lem:Alice-equiv-type}, it is clear that if $P(x,y^{i'}|u,v)$
was a $x_{0}$ ($x_{1}$) cell, so is $P(x,y|u,v^{i'})$ because this
only depends on $x$. 

Lemma \ref{lem:Bob-equiv-cell} is correct for every $x$, therefore
the entire row $P(\bullet,y^{i'}|u,v)$ is equivalent to the row $P(\bullet,y|u,v^{i'})$
and therefore if $P(x,y^{i'}|u,v)$ was a $y_{<}$ ($y_{>}$) cell,
so is $P(x,y|u,v^{i'})$. All together we get $c(x,y^{i'}|u,v)=c(x,y|u,v^{i'})$.\end{proof}
\begin{lem}
\begin{onehalfspace}
\label{lem:proof-Bob-side}In the conditional system \textup{$P^{z=0}$}
, for any $i\in[n]$ 
\[
\forall x_{\overline{i}},y_{\overline{i}},u,v_{i},v_{\overline{i}}\qquad\underset{x_{i},y_{i}}{\sum}P^{z=0}(x,y|u,v)=\underset{x_{i},y_{i}}{\sum}P^{z=0}(x,y|u,v^{i'}).
\]
\end{onehalfspace}
\end{lem}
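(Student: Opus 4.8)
The plan is to prove this exactly as Lemma~\ref{lem:proof-Alice-side} was proved, only with the roles of Alice and Bob exchanged: instead of Lemmas~\ref{lem:Alice-equiv-cell}--\ref{lem:Alice-equiv-type} I would use their Bob-side analogues, Lemmas~\ref{lem:Bob-equiv-cell}--\ref{lem:Bob-equiv-type}, which say that flipping Bob's $i$-th input sends a cell of the marginal box to the cell obtained by flipping Bob's $i$-th output, and that the two cells have the same type ($x_{0}/x_{1}$, $y_{<}/y_{>}$), so that Eve's rescaling factor $c$ agrees on both. First I would dispose of the case in which flipping $v_{i}$ does not affect the underlying product box at all: since each single PR-box $P_{X_{i}Y_{i}|U_{i}V_{i}}$ depends on $(u_{i},v_{i})$ only through the product $u_{i}\cdot v_{i}$, this happens precisely when $u_{i}=0$; there $P_{XY|U=u,V=v}=P_{XY|U=u,V=v^{i'}}$, so Eve shifts probabilities identically and $P^{z=0}(\cdot|u,v)=P^{z=0}(\cdot|u,v^{i'})$, making the claimed identity immediate.

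It then remains to treat $u_{i}=1$, and since $(v^{i'})^{i'}=v$ we may additionally assume $v_{i}=1$ (otherwise interchange $v$ and $v^{i'}$), so that the hypothesis of Lemma~\ref{lem:Bob-equiv-type} is met. As in Lemma~\ref{lem:proof-Alice-side} I would prove the slightly stronger statement that $\sum_{y_{i}}P^{z=0}(x,y|u,v)=\sum_{y_{i}}P^{z=0}(x,y|u,v^{i'})$ for every $x$, from which the lemma follows by summing over $x_{i}$. Unfolding $P^{z=0}=c\cdot P$, this is the chain
\begin{align*}
\sum_{y_{i}}P^{z=0}(x,y|u,v^{i'}) & =\sum_{y_{i}}c(x,y|u,v^{i'})\,P(x,y|u,v^{i'})\\
 & =\sum_{y_{i}}c(x,y^{i'}|u,v)\,P(x,y^{i'}|u,v)\\
 & =\sum_{y_{i}}P^{z=0}(x,y^{i'}|u,v)=\sum_{y_{i}}P^{z=0}(x,y|u,v),
\end{align*}
where the second equality replaces $(y,v^{i'})$ by $(y^{i'},v)$ using Lemma~\ref{lem:Bob-equiv-cell} for the factor $P$ and Lemma~\ref{lem:Bob-equiv-type} for the factor $c$, and the last equality is just the reindexing $y_{i}\mapsto\overline{y_{i}}$ of the sum.

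I do not expect a genuine obstacle here: the argument is a near-verbatim transcription of Lemma~\ref{lem:proof-Alice-side}. The only points that need care are that the ``inert'' case on Bob's side is $u_{i}=0$ (rather than the $v_{i}=0$ that occurred on Alice's side), and that one should check the hypotheses $u_{i}=1$ and $v_{i}=1$ required by Lemmas~\ref{lem:Bob-equiv-cell}--\ref{lem:Bob-equiv-type} are actually in force before invoking them --- which is exactly what the two preliminary reductions above arrange. Together with Lemma~\ref{lem:proof-Alice-side}, this establishes that $P^{z=0}$ satisfies all the conditions of Definition~\ref{our-system}, completing the verification needed for Theorem~\ref{thm:main}.
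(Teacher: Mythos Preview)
Your proposal is correct and follows essentially the same route as the paper's own proof: dispose of the inert case $u_{i}=0$, then in the remaining case prove the stronger identity $\sum_{y_{i}}P^{z=0}(x,y|u,v)=\sum_{y_{i}}P^{z=0}(x,y|u,v^{i'})$ via the chain of equalities using Lemmas~\ref{lem:Bob-equiv-cell} and~\ref{lem:Bob-equiv-type}. Your explicit reduction to $v_{i}=1$ before invoking Lemma~\ref{lem:Bob-equiv-type} is a small piece of extra care that the paper leaves implicit.
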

\begin{proof}
\begin{onehalfspace}
In an analogous way to the proof of Lemma \ref{lem:proof-Alice-side},
if $u_{i}=0$ the proof is trivial. Assume $u_{i}=1$.

We prove that for all $x,y_{\overline{i}},u,v_{i},v_{\overline{i}}$,
$\underset{y_{i}}{\sum}P^{z=0}(x,y|u,v)=\underset{y_{i}}{\sum}P^{z=0}(x,y|u,v^{i'})$.
This in particular implies that $\underset{x_{i},y_{i}}{\sum}P^{z=0}(x,y|u,v)=\underset{x_{i},y_{i}}{\sum}P^{z=0}(x,y|u,v^{i'})$
also holds.

\begin{align*}
\underset{y_{i}}{\sum}P^{z=0}(x,y|u,v^{i'}) & =\underset{y_{i}}{\sum}c(x,y|u,v^{i'})\cdot P(x,y|u,v^{i'})=\\
 & =\underset{y_{i}}{\sum}c(x,y^{i'}|u,v)\cdot P(x,y^{i'}|u,v)=\\
 & =\underset{y_{i}}{\sum}P^{z=0}(x,y^{i'}|u,v)=\\
 & =\underset{y_{i}}{\sum}P^{z=0}(x,y|u,v).\tag*{\qedhere}
\end{align*}
\end{onehalfspace}

\end{proof}
Note that the only difference between the full non-signalling conditions
and what we have proved here is that in Lemma \ref{lem:proof-Alice-side}
we have to keep the summation over $y_{i}$. Moreover, it is interesting
to see that at least on Bob's side, the {}``full'' non-signalling
conditions also hold in $P^{z=0}$. Since Eve's strategy is defined
to work on each row separately, the symmetry on Bob's side does not
break at all. 

Lemmas \ref{lem:proof-Alice-side} and \ref{lem:proof-Bob-side} together
prove that the assumption of Definition \ref{our-system} holds even
conditioned on Eve's result. Adding this to the rest of the proof
of \cite{hanggi2010impossibility} proves Theorem \ref{thm:main}.

\section{Concluding Remarks and Open Questions \label{sec:Concluding-Remarks}}

In this letter we proved that privacy amplification is impossible
even if we add a lot more non-signalling conditions over the assumptions
of \cite{hanggi2010impossibility}. This also implies that privacy
amplification is impossible under the assumptions of an almost backward
non-signalling system. An interesting question which arises from our
theorem is whether the non-signalling conditions in which the backward
non-signalling systems and the almost backward non-signalling system
differs are the ones which give Eve the tremendous power which makes
privacy amplification impossible. If yes, then it might be the case
that privacy amplification is possible in the relevant setting of
backward non-signalling systems. On the other hand, if the answer
to this question is no, then privacy amplification is also impossible
for backward non-signalling systems. If this is indeed the case then
it seems that the security proof for any practical QKD protocol will
have to be based on quantum physics somehow, and not on the non-signalling
postulate alone. 

Another interesting question is whether we can extend our result to
the case where Alice and Bob use a more interactive protocol to amplify
the secrecy of their key; instead of just applying some hash function
only on Alice's output $X$ and get a key $K=f(X)$, maybe they can
use Bob's output $Y$ as well and create a key $K=g(X,Y)$.

\paragraph*{Acknowledgments:}

Rotem Arnon Friedman thanks Renato Renner for helpful discussions.
Amnon Ta-Shma and Rotem Arnon Friedman acknowledge support from the
FP7 FET-Open project QCS. Esther Hänggi acknowledges support from
the National Research Foundation (Singapore) and the Ministry of Education
(Singapore).

\bibliographystyle{plain}
\bibliography{biblopropos}

\end{document}